\theoremstyle{plain}
\newtheorem{theorem}{Theorem}
\newtheorem{proposition}[theorem]{Proposition}
\theoremstyle{definition}
\newtheorem{definition}[theorem]{Definition}
\newtheorem{example}[theorem]{Example}
\newtheorem{corollary}[theorem]{Corollary}
\theoremstyle{remark}
\newlength{\blank}
\newcommand{\nc}{\newcommand}
\def\mc{\mathcal}
\def\R{\right}
\def\ox{\otimes}
\nc{\nts}[1]{{\sf \textcolor{red}{[#1]}}\textcolor{red}{\marginpar[\hfill !!!]{!!!}}}
\nc{\bra}[1]{\langle#1|}
\nc{\ket}[1]{|#1\rangle}
\nc{\ketbra}[2]{|#1\rangle\!\langle#2|}
\nc{\braket}[2]{\langle#1|#2\rangle}
\nc{\innerproduct}[2]{\langle #1, #2 \rangle}
\nc{\eref}[1]{(\ref{#1})}
\nc{\pure}[1]{\ketbra{#1}{#1}}
\nc{\partrans}[2]{\Gamma_{#1}\left(#2\right)}
\def\squareforqed{\hbox{\rlap{$\sqcap$}$\sqcup$}}
\def\qed{\ifmmode\squareforqed\else{\unskip\nobreak\hfil

\penalty50\hskip1em\null\nobreak\hfil\squareforqed
\parfillskip=0pt\finalhyphendemerits=0\endgraf}\fi}
\def\endenv{\ifmmode\;\else{\unskip\nobreak\hfil
\penalty50\hskip1em\null\nobreak\hfil\;
\parfillskip=0pt\finalhyphendemerits=0\endgraf}\fi}
\nc\supp{\mathrm{supp}}
\nc{\mathword}[1]{\mathinner{\mathrm{#1}}}
\def\tr{\mathword{Tr}}
\def\Hs{\mc{H}}
\DeclareRobustCommand\idop{\leavevmode\hbox{\small1\normalsize\kern-.33em1}}
\DeclareRobustCommand\scriptidop{\leavevmode\hbox{\fontsize{7}{8}\selectfont 1\scriptsize\kern-.33em1}}
\def\1{\idop}
\def\r{\rho}
\def\<{\leftarrow}
\def\1{{\mathbb 1}}
\def\rm{\mathrm}
\def\mc{\mathcal}
\def\t{\mathbf{t}}
\def\F{\mathbf{F}}
\def\id{\rm{id}}
\def\T{\mathrm{T}}
\def\ops{\mathbf{ops}}
\newcommand{\bs}[1]{\mathrm{#1}} 
\def\q{\bs{Q}}
\def\Q{\bs{Q}}
\def\C{\bs{C}}
\def\r{\bs{R}}
\def\R{\bs{R}}
\def\P{\bs{P}}
\def\P{\bs{P}}
\def\Hs{\mc{H}}
\def\sdpop{\Lambda}
\def\sdpopQ{\Gamma}
\def\ai{\bs{A}}
\def\aic{\bs{\tilde{A}}}
\def\ao{\bs{A}'}
\def\aoc{\bs{\tilde{A}}'}
\def\aocc{\bs{\mathring{A}'}}
\def\bi{\bs{B}}
\def\bo{\bs{B}'}
\def\distop{\mc{Y}_{\aic\bo\<\aoc\bi}}
\def\PPTf{F_{\Gamma}}
\def\PPTD{D_{\Gamma}}
\def\a{\bs{a}}
\def\b{\bs{b}}
\def\b{\bs{b}}
\def\bi{\bs{B}}
\def\Zc{\mathcal{Z}}
\def\Alice{Alice}
\def\Bob{Bob}
\def\cdp{\mathcal{C}}
\def\UA{\mathbf{UA}}
\def\EA{\mathbf{EA}}
\def\NS{\mathbf{NS}}
\def\FHA{\mathbf{FHA}}
\def\FCA{\mathbf{FCA}}
\def\PPTp{\mathbf{PPTp}}
\def\sym{\mathbb{S}}
\def\asym{\mathbb{A}}
\def\slo{0.22}
\def\dM{K} 
\begin{document}
    
\date{August 18, 2014}

    \title{On the power of PPT-preserving and non-signalling codes}
    \author{Debbie Leung, William Matthews
    \thanks{William Matthews (will@northala.net) is with the Department of Applied
Mathematics and Theoretical Physics, University of Cambridge,
Cambridge, U.K. This work was initiated while W.M. was with the
Institute for Quantum Computing, University of Waterloo, Waterloo, ON,
Canada and completed while he was with the Statistical Laboratory at
the Department of Pure Mathematics and Mathematical Physics,
University of Cambridge.  Part of this work was conducted when
W.M. and Debbie Leung were participants of the programme "Mathematical
Challenges in Quantum Information" at the Isaac Newton Institute for
Mathematical Sciences.  W.M. gratefully acknowledges the support of
the NSERC, QuantumWorks and the Isaac Newton Trust. Debbie Leung is
with the Institute for Quantum Computing, and the Department of
Combinatorics and Optimization, University of Waterloo,
Ontario, Canada.  She is funded by the Canada Research Chair,
Canadian Institute For Advanced Research, and NSERC.} }

    
    \begin{abstract}
We derive ‘one-shot’ upper bounds for quantum noisy channel codes. We do so by regarding a channel code as a bipartite operation with an encoder belonging to the sender and a decoder belonging to the receiver, and imposing constraints on the bipartite operation. We investigate the power of codes whose bipartite operation is non-signalling from Alice to Bob, positive-partial transpose (PPT) preserving, or both, and derive a simple semidefinite program for the achievable entanglement fidelity. Using the semidefinite program, we show that the non-signalling assisted quantum capacity for memoryless channels is equal to the entanglement-assisted capacity. We also relate our PPT-preserving codes and the PPT-preserving entanglement distillation protocols studied by Rains. Applying these results to a concrete example, the $3$-dimensional Werner-Holevo channel, we find that codes that are non-signalling and PPT-preserving can be strictly less powerful than codes satisfying either one of the constraints, and therefore provide a tighter bound for unassisted codes. Furthermore, PPT-preserving non-signalling codes can send one qubit perfectly over two uses of the channel, which has no quantum capacity. We discuss whether this can be interpreted as a form of superactivation of quantum capacity.

    \end{abstract}
    
    \maketitle 
    
    \section{Introduction}
    A basic problem in quantum information theory is
    to determine the ability of a noisy channel to convey
    quantum information at a given standard of fidelity.
    The \emph{quantum capacity} measures 
    the optimal asymptotic rate of transmission
    (in qubits per channel use)
    possible for arbitrarily good fidelities
    (if not \emph{perfect} fidelity).
    The LSD (Lloyd~\cite{1997-Lloyd},
    Shor~\cite{2002-Shor},
    Devetak~\cite{2005-Devetak}) Theorem shows that the
    quantum capacity is equal to the regularised coherent information,
    an optimization that involves unlimited number of copies of the channel.
    Our understanding of the quantum capacity remains 
    limited -- given a simple memoryless channel (such as the 
    qubit depolarizing channel for certain error parameter), 
    determining whether it has a positive quantum capacity
    is not known to be decidable. 
    To gain insights into the often intractable problem of
    determining quantum capacities of channels, 
    ``assisted capacities'' have been studied (see e.g.
    \cite{2006-BennettDevetakShorSmolin}), where 
    the sender and the receiver are given extra free resources, 
    such as entanglement or classical communication.  

    In this paper we are interested in the \emph{non-asymptotic}
    (or finite blocklength) regime 
    focusing on the trade-off between the dimension of the quantum
    system to be sent, the number of channel uses made, and the
    fidelity achieved.  In the absence of
    feedback in the coding protocol, this is also
    called the `one-shot' regime since we can treat multiple
    channel uses as a single use of a larger channel.
    In the one-shot regime, we can remove assumptions such 
    as memoryless channel uses, address questions concerning 
    quantum error correcting codes, and understand how fast 
    the achievable rate converges to the capacity as 
    the number of uses increases.  Sometimes, one-shot studies 
    provide results concerning asymptotic capacities.  
    However, the exact trade-off of interest is generally intractable.  
    Even in the classical case, it is not practical
    to compute the obtainable region of parameters exactly,
    but quite powerful bounds are known \cite{PPV}.
    Parallel to the study of assisted capacities, one can consider 
    assisted codes in the finite blocklength regime.
    
    Mosonyi and Datta \cite{MD}, Wang and Renner \cite{RennerWang} and
    Renes and Renner \cite{RR} have given one-shot converse and
    achievability bounds for classical data transmission by unassisted
    codes over classical-quantum channels.  In \cite{DH} Datta and
    Hsieh derive converse and achievability results for classical and
    quantum data transmission by entanglement-assisted codes over
    general quantum channels in terms of smoothed min- and
    max-entropies.  
A drawback of the bounds given in \cite{DH}
is that no explicit method of computation is given,
and it is not clear that an efficient method exists.
A one-shot converse bound for entanglement-assisted codes
amenable to computation was given in Matthews and Wehner
\cite{1210.4722} by generalising the
hypothesis-testing based `meta-converse' of \cite{PPV} to quantum
channels. In particular, the bound is a semidefinite program (SDP).

An alternative approach to upper bound one-shot performance is to
optimize data transmission over a larger class of coding procedures
which is mathematically easier to describe.
This type of approach is applied to the related task of
entanglement distillation in an early paper by Rains \cite{2001-Rains-SDP},
which gives one-shot converse bounds for entanglement distillation
by local operations and classical communication in the form
of an SDP for the performance of the more
powerful class of PPT-preserving operations, along with 
many other insightful results.
This was also the approach used in \cite{2012-Matthews},
which derives a linear program for the performance
of transmitting classical data via classical channels by
codes which are \emph{non-signalling} when the encoder and
decoder are considered as a single bipartite operation.
The linear program was shown to be equivalent to the meta-converse of
of \cite{PPV}.  
Our paper follows this approach.  We consider quantum data
transmission via quantum channels using codes that are
non-signalling, PPT-preserving or both, when viewed as bipartite
operations.  We derive one-shot
correspondences that allow our results to be viewed as
extensions to results in \cite{1210.4722} and \cite{2001-Rains-SDP}.

    The structure of the paper along with a summary of our results are 
    as follows.  

    We start with some 
    mathematical and notational preliminaries in Section 
    \ref{prelims}.  
    Generally speaking, a ``code'' refers to a set of operations
    performed by the sender Alice and the receiver Bob that, when
    combined with the given channel uses, effects the data 
    transmission.
    In Section \ref{code-classes}
    we define a very general class of codes, 
    the \emph{forward-assisted codes}, which 
    can be implemented by local operations and
    forward (i.e. Alice to Bob)
    quantum communication over an arbitrary auxiliary channel
    (in addition to the use of the given noisy channel).
    This class includes a number of important,
    operationally defined subclasses:
    the \emph{unassisted codes},
    which only use local operations;
    the \emph{entanglement-assisted codes},
    where the auxiliary channel is only used to share
    entanglement between Alice and Bob before
    the local operations are applied;
    and the \emph{forward-classical-assisted codes},
    where the auxiliary channel is classical.  
    We use the fact that forward-assisted codes correspond to
    \emph{bipartite operations} which are non-signalling
    from Bob to Alice to define subclasses of
    forward-assisted code based on constraints
    on these bipartite operations.  
    The \emph{non-signalling codes} are those where
    the bipartite operation is also non-signalling from Alice to Bob.
    This class includes unassisted and entanglement-assisted codes.
    The \emph{PPT-preserving codes} are those for which
    the bipartite operation is PPT-preserving.
    This class includes all unassisted and forward-classical-assisted codes, 
    but not all entanglement-assisted codes. 
    Section \ref{code-classes} provides precise definitions of all
these classes and describes the relationships between them.

    Section \ref{SDPs} contains our main technical contribution.  
    We derive simple semidefinite programs (SDPs)
    for the optimal \emph{channel fidelity} of codes which
    are non-signalling, PPT-preserving, or both.

    In section \ref{NScodes}, we present the first application of our
    SDPs.  We compare our optimal channel fidelity for non-signalling
    codes with an earlier upper bound for entanglement-assisted codes
    (derived with different techniques in \cite{1210.4722} for 
    the \emph{success probability} of
    classical data transmission).  Surprisingly, our new bound, which
    applies to a larger class of codes, is at least as tight as the old
bound. Furthermore, from the asymptotic analysis of the
    earlier bound \cite{1210.4722}, we obtain a new asymptotic result
    for memoryless noisy channels: that entanglement-assisted and
    non-signalling codes give the same capacity.

In section \ref{PPTpCD}, we study optimal channel fidelity for
PPT-preserving codes.  We derive connections between PPT-preserving
codes and PPT-preserving entanglement distillation scheme studied in
by Rains in \cite{2001-Rains-SDP}.  We show that Rains' SDP for the
fidelity of PPT-preserving entanglement distillation provides lower
bounds on the fidelity of the PPT-preserving codes. We also show that
for certain special channels Rains' SDP coincides with our SDP for the
fidelity of PPT-preserving codes.

    In section \ref{WH}, we apply our SDPs to a concrete example,
    computing the fidelity for codes that are
    PPT-preserving, non-signalling or both, over the
    Werner-Holevo channels for blocklengths up to 120.
    The results demonstrate that
    codes which satisfy both constraints 
    can be strictly less powerful
    than codes that satisfy either one of the constraints.
    Thus combining the PPT-preserving and non-signalling constraints 
    provides strictly stronger upper bounds for unassisted 
    communication, at least for
    finite block-lengths.
    The results suggest that this improvement may
    even persist in the asymptotic regime.
      
    Furthermore, the results of section \ref{PPTpCD} and Rains
    \cite{2001-Rains-SDP} imply that PPT-preserving codes enable
    zero-error quantum communication (of one qubit) over two uses of
    three-dimensional Werner-Holevo channel.
    Surprisingly, the same holds even if the codes are also 
    non-signalling.  
    We discuss the relationship of this phenomenon to
    the superactivation of quantum capacity \cite{2008-SmithYard}.
    Our result
    could be considered a form of superactivation, 
    since neither the channel nor the code involved has 
    quantum capacity, yet their combination can communicate quantum 
    data perfectly.  
    However, we do not know whether the code can
    be implemented by local operations and forward communication
    over a channel with no quantum capacity. If it could be,
    then our result would demonstrate a very strong version
    of superactivation in the sense of \cite{2008-SmithYard},
    where two \emph{channels} with no quantum capacity
    could be used together to
    transmit quantum information \emph{perfectly}.
    In this connection, we show, via an example, that not all 
    PPT-preserving and non-signalling codes can be simulated by zero 
    capacity forward quantum channel.  

%
    \section{Preliminaries}\label{prelims}

    In this section, we summarize mathematical concepts required for
    the results.  We will also define unambiguous conventions 
    concerning our notation for quantum states and operations,
    which help us avoid a proliferation of brackets and
    tensor product symbols.

    A quantum system $\Q$ is associated to
    a Hilbert space $\mc{H}_{\Q}$ of dimension
    $\dim (\Q)$ (in this work we only deal with
    finite dimensional systems)
    and is equipped with a real, orthonormal `computational basis'
    $\{ \ket{i}_{\Q}: i = 1, \ldots, d \}$.
    We will always write linear operators on $\mc{H}_{\Q}$
    with a subscript identifying the system they act on, 
    for example, $X_{\Q}$.

    We assume that there is some fixed underlying order on systems
    which determines the order in which tensor products are taken.  
    We can write a product of operators acting on disjoint subsystems
    without the $\ox$ symbol, by taking it as given
    that the operators are padded with appropriate
    identity operators. For example,
    $X_{\Q}Y_{\R} = Y_{\R}X_{\Q} = X_{\Q}\ox Y_{\R} = (X_{\Q} \ox
    \1_{\R})(\1_{\Q}\ox Y_{\R})$. The same applies to a product of
    operators acting on different but not necessarily disjoint
    subsystems, for example, $X_{\P\Q}Y_{\Q\R} = (X_{\P\Q}\ox\1_{\R})
    (\1_{\P} \ox Y_{\Q\R})$.
 
    An \emph{operation} $\mc{N}_{\R\<\Q}$ (or \emph{channel})
    with input system $\Q$
    and output system $\R$
    is a completely positive, trace preserving linear map
    from the bounded linear operators on $\mc{H}_{\Q}$ to the bounded 
    linear operators on $\mc{H}_{\R}$.  Since we only deal with finite
    dimensional systems, all linear operators are bounded.
    As with operators, we always explicitly write
    the input and the output systems as subscripts.
    We write the set of all such operations as
    $\ops(\Q \to \R)$.  
    Our subscript convention has one exception:
    the trace operation on $\Q$, $\tr_{\Q}$, has the trivial,
    one-dimensional, output system, so we only write the input system.
    
    We denote the \emph{transpose map} on system $\Q$
    by $\t_{\Q\<\Q}$.
    It is the trace preserving, but not completely positive,
    linear map such that
    $\t_{\Q\<\Q}: \ketbra{i}{j}_{\Q} \mapsto \ketbra{j}{i}_{\Q}$.
    We also make use of the conventional notation
    $X^{\T}_{\Q}$ for $\t_{\Q\<\Q} X_{\Q}$.  
    
    Given two systems $\Q$ and $\tilde{\Q}$ of equal dimension,
    we can identify states of $\Q$ with states of $\tilde{\Q}$
    via the \emph{identity operation}
    $\id_{\tilde{\Q}\<\Q}: \ketbra{i}{j}_{\Q} \mapsto
    \ketbra{i}{j}_{\tilde{\Q}}$.
    Furthermore, we denote the isotropic maximally entangled
    state of $\tilde{\Q}\Q$ by
    $\phi_{\tilde{\Q}\Q} :=  \ketbra{\phi}{\phi}_{\tilde{\Q}\Q}$,
    \begin{equation}
    \ket{\phi}_{\tilde{\Q}\Q} :=
    \dim(\Q)^{{-}1/2} \sum_{i=1}^{\dim{\Q}}
    \ket{i}_{\tilde{\Q}}
    \ket{i}_{\Q} \,.
    \label{mes}
    \end{equation}
    A useful fact, sometimes called the `transpose trick',
    is that for any operator $M_{\Q}$ on $\Hs_{\Q}$, we have
    \begin{equation}\label{ttrick}
       M_{\Q} \ket{\phi}_{\tilde{\Q}\Q} =
       M_{\tilde{\Q}}^{\T} \ket{\phi}_{\tilde{\Q}\Q},
    \end{equation}
    where $M_{\tilde{\Q}} := \id_{\tilde{\Q}\<\Q} M_{\Q}$.

    To denote the application of a linear map $\mc{N}_{\R\<\Q}$
    to an operator $X_{\Q}$, we write simply $\mc{N}_{\R\<\Q}X_{\Q}$,
    just as we would write the application
    of a matrix to a vector without parenthesis.
    Products of operations represent compositions, with
    a convention similar to that defined for operators above,  
    so that tensor symbols and identity operations are omitted.  
    For example,
    $\mc{N}_{\R\<\Q} \mc{M}_{\T\<\P} X_{\Q\P} =
    (\mc{N}_{\R\<\Q} \ox \mc{M}_{\T\<\P}) X_{\Q\P}$, and 
    $\mc{N}_{\R\<\Q} X_{\Q\P} = (\mc{N}_{\R\<\Q} \ox \id_{\P\<\P}) X_{\Q\P}$. 

    We adopt the convention that
    multiplication of operators takes precedence over the
    application of linear maps from operators to operators,
    such as operations or the transpose map.
    For example $\t_{\Q\<\Q} X_{\Q} Y_{\Q} = \t_{\Q\<\Q} (X_{\Q} Y_{\Q})
    $, and 
    $\tr_{\Q} X_{\P\Q} Y_{\Q\R} = \tr_{\Q} (X_{\P\Q} Y_{\Q\R})$.
    
    To further illustrate these notational conventions,
    we note a useful fact 
    \begin{eqnarray}
    \hspace*{-3ex} & \tr_{\Q} X_{\P\Q} \t_{\Q\<\Q} Y_{\Q} 
    = \tr_{\Q} ( X_{\P\Q}  (\1_{\P} \otimes (\t_{\Q\<\Q} Y_{\Q})) ) 
    \nonumber
    \\ \hspace*{-3ex} & = \tr_{\Q} ( (\t_{\Q\<\Q} X_{\P\Q})  
    (\1_{\P} \! \otimes \! Y_{\Q}) ) 
    = \tr_{\Q} (\t_{\Q\<\Q} X_{\P\Q}) Y_{\Q} .
    \label{usefulfact}
    \end{eqnarray}
   
    In this paper, we define the \emph{Choi matrix} $N_{\R\Q}$ of an 
    operation $\mc{N}_{\R\<\Q}$
    to be the unique operator on $\mc{H}_{\R}\ox\mc{H}_{\Q}$
    such that for all operators $X_{\Q}$ on $\mc{H}_{\Q}$, 
\begin{equation}
\mc{N}_{\R\<\Q} X_{\Q} = \tr_{\Q} N_{\R\Q} \t_{\Q\<\Q} X_{\Q} 
= \tr_{\Q} (\t_{\Q\<\Q} N_{\R\Q}) X_{\Q} 
\label{choi2channel}
\end{equation} 
where the last equality comes from Eq.\ (\ref{usefulfact}).
    Our Choi matrix is equal to the common definition:  
    \begin{equation}
        N_{\R\Q} = \dim(\Q) \;
        \id_{\Q\<\tilde{\Q}} \mc{N}_{\R\<\Q} \phi_{\tilde{\Q}\Q}.
        \label{choi}
    \end{equation}
    We adopt the convention that where operations
    are denoted by a calligraphic letter,
    the corresponding Choi matrix is the same letter in the regular font.
    
A bipartite operator $X_{\P\Q}$ 
is said to be PPT (positive partial-transpose) if
$\t_{\P\<\P} X_{\P\Q} \geq 0$.
This condition is equivalent to $\t_{\Q\<\Q} X_{\P\Q} \geq 0$,
and is independent of the basis in which the transpose is taken.

An operation $\mc{F}_{\bo\<\ai}$ is called 
a `Horodecki' channel (or PPT-binding channel) if its Choi matrix 
$F_{\bo\ai}$ is PPT \cite{2000-Horodecki3}.

Let $\tilde{\ai}$ and $\tilde{\bi}$ be arbitrary systems in
the possession of Alice and Bob, respectively.
A bipartite operation
$\mc{Z}_{\ao\bo\<\ai\bi}$
is `PPT-preserving' \cite{1999-Rains,2001-Rains-SDP}
if it takes any state
which is PPT with respect to the
Alice / Bob partition to another PPT state.  
In other words, 
$\t_{\bi\tilde{\bi}\<\bi\tilde{\bi}}
\rho_{\ai\tilde{\ai}\bi\tilde{\bi}} \geq 0$
implies $\t_{\bo\tilde{\bi}\<\bo\tilde{\bi}} \mc{Z}_{\ao\bo\<\ai\bi} 
\rho_{\ai\tilde{\ai}\bi\tilde{\bi}} \geq 0$.  
As shown in \cite{2001-Rains-SDP}, a bipartite operation
$\mc{Z}_{\ao\bo\<\ai\bi}$ is PPT-preserving if and only if
its Choi matrix $Z_{\ao\bo\ai\bi}$ is PPT, that is
\begin{equation}\label{ChoiPPT}
    \t_{\bi\bo\<\bi\bo} Z_{\ao\bo\ai\bi} \geq 0.
\end{equation}

The PPT-preserving operations include all operations that can be
implemented by local operations and arbitrary rounds of two-way
classical communication (these are known as `LOCC' operations).
In fact, the PPT-preserving operations include even those 
implemented by local operations and arbitrary rounds of two-way
communication over Horodecki channels.
To see this, note that a Horodecki channel $\mc{F}_{\ai\<\bo}$ is a
degenerate PPT-preserving bipartite operation where
$\dim \ao = \dim \bi = 1$,
and the class of PPT-preserving operations is closed under composition.  

A bipartite operation
$\mc{Z}_{\ao\bo\<\ai\bi}$
is non-signalling from Bob to Alice if
$\tr_{\bo} \mc{Z}_{\ao\bo\<\ai\bi} = \mc{Z}^{\Alice}_{\ao\<\ai} \tr_{\bi}$
for some operation $\mc{Z}^{\Alice}_{\ao\<\ai}$.
That is, the marginal state of Alice's output is
given by some fixed operation applied to the marginal
state of Alice's input.
The equivalent condition on the Choi matrix
$Z_{\ao\bo\ai\bi}$ is
\begin{equation}\label{NSBA}
    \tr_{\bo} Z_{\ao\bo\ai\bi} = Z^{\Alice}_{\ao\ai}\1_{\bi},
\end{equation} where $Z^{\Alice}_{\ao\ai}$ is the Choi
matrix for $\mc{Z}^{\Alice}_{\ao\<\ai}$.
As a Choi matrix, $Z^{\Alice}_{\ao\ai}$ must
satisfy $\tr_{\ao} Z^{\Alice}_{\ao\ai} = \1_{\ai}$,
so (\ref{NSBA}) implies that
$Z^{\Alice}_{\ao\ai} = \tr_{\bo\bi} Z_{\ao\bo\ai\bi} / \dim(\bi)$.
Similarly, $\mc{Z}_{\ao\bo\<\ai\bi}$
is non-signalling from Alice to Bob if
\begin{equation}\label{NSAB}
    \tr_{\ao} Z_{\ao\bo\ai\bi} = Z^{\Bob}_{\bo\bi}\1_{\ai},
\end{equation}
where
$Z^{\Bob}_{\bo\bi} = \tr_{\ao\ai} Z_{\ao\bo\ai\bi} / \dim(\ai)$.
These conditions are quantum generalizations
of the classical non-signalling conditions
on bipartite conditional probability distributions.
One-way non-signalling operations have also
been referred to as `semi-causal' in the literature \cite{2001-BeckmanGottesmanNielsenPreskill,2002-EggelingSchlingemannWerner}.

\section{Classes of quantum codes}\label{code-classes}

In this section we define a very general class of codes, the 
\emph{forward-assisted codes}, and then various code subclasses
with operational or mathematical significance.

We represent the use of the noisy channel connecting Alice
to Bob by an operation $\mc{N}_{\bi\<\ao}$.
A \emph{forward-assisted code} is one which has the form
illustrated in Figure \ref{codefig}. The state to be transmitted
by Alice resides on a system $\ai$ with $\dim(\ai) = K$.
Alice performs an encoding map
$\mc{E}_{\ao\Q\<\ai}$ and sends the output systems through the 
noisy channel $\mc{N}_{\bi\<\ao}$ and some arbitrary side
channel $\mc{F}_{\R\<\Q}$. Then Bob applies
a local decoding operation $\mc{D}_{\bo\<\R\bi}$, where
the system $\bo$ has $\dim(\bo) = K$.
This results in an overall operation
$\mc{M}_{\bo\<\ai} = 
\mc{D}_{\bo\<\R\bi}\mc{F}_{\R\<\Q}\mc{N}_{\bi\<\ao}\mc{E}_{\ao\Q\<\ai}
\in \ops(\ai \to \bo)$.
We call the dimension $K$ the \emph{size} of the code.

We note that codes for multiple channel uses which
make use of some form of \emph{feedback} between the uses
(for example, codes assisted by two-way classical communication)
do not necessarily fall into the class of forward-assisted
codes.

\def\F{F} 
\newcommand{\mF}[3]{F^{#1}(#2,#3)} 
Given two systems $\tilde{\Q}$ and $\Q$ of equal dimension,
the \emph{entanglement fidelity of a state} $\sigma_{\tilde{\Q}\Q}$
is $\tr_{\tilde{\Q}\Q} \phi_{\tilde{\Q}\Q} \sigma_{\tilde{\Q}\Q}$.
Given $\mc{M}_{\bo\<\ai} \in \ops(\ai \to \bo)$ with
$\dim \ai = \dim \bo$, we follow \cite{2003-KretschmannWerner}
in calling
\[
    \F(\mc{M}_{\bo\<\ai}) =
    \tr_{\bo\tilde{\ai}}
    \phi_{\bo\tilde{\ai}} \mc{M}_{\bo\<\ai}\phi_{\ai\tilde{\ai}}
\]
the \emph{channel fidelity} of $\mc{M}_{\bo\<\ai}$.
When Alice's input is half of a maximally entangled state
$\phi_{\ai\tilde{\ai}}$
the overall effect of the encoded transmission yields a state
$\tau_{\bo\tilde{\ai}}$, as shown in the figure.
The channel fidelity of $\mc{M}_{\bo\<\ai}$ is the
entanglement fidelity of $\tau_{\bo\tilde{\ai}}$,
and we call this the \emph{channel fidelity of the code}.

The encoding procedure results in some average
\emph{channel input state}, which we will denote by
$\rho_{\ai} := \tr_{\Q\aic} \mc{E}_{\ao\Q\<\ai} \phi_{\ai\aic}$
(also shown in the figure).

\begin{figure}[h]
    \centering
    \begin{tikzpicture}[scale=1]     

    \def\rightx{3.5}

    \def\encx{-2.375}
    \def\ency{1}
    \def\encw{0.375}
    \def\ench{0.5}

    \def\decx{2.125}
    \def\decy{-0.75}
    \def\decw{0.375}
    \def\dech{0.5}

    \def\chanx{0.375}
    \def\chany{0.375}
    \def\chanw{0.375}

    \def\auxx{-1.375}
    \def\auxy{-0.125}
    \def\auxw{0.375}

    \def\lo{0.25} 
    \def\og{0.1} 

    \draw[draw=black,very thick]
    (\rightx,2.0) --
    (-3.5,2.0) --
    (-3.8,1.5) --
    (-3.5,1.0) -- (-2.75,1.0);

    \draw[draw=red] (-0.2,0.8) -- (0.6,1.2);
    \node at (1.0,1.2) {$\rho_{\ao}$};

    \draw[draw=red] (-3.1,0.8) -- (-3.1,2.4);
    \node at (-2.8,2.5) {$\phi_{\ai\tilde{\ai}}$};

    \draw[draw=red] (3.3,-1.5) -- (3.3,2.4);
    \node at (3.5,2.5) {$\tau_{\bo\tilde{\ai}}$};

    \node at (-3.25,1.25) {$\bs{A}$};
    \node at (-0.5,1.5) {$\bs{A}'$};

    \node at (1.25,-0.25) {$\bs{B}$};
    \node at (3,-0.5) {$\bs{B}'$};

    \node at (-1.5,0.75) {$\bs{Q}$};
    \node at (-1.25,-1) {$\bs{R}$};

    \draw[draw=black,very thick]
    (2.5,-0.75) -- (\rightx,-0.75);

    \draw[draw=black,very thick]
    (\encx,\ency-\lo) -- (-1.75,\ency-\lo) -- (-1.5,0.25)
    -- (-1.25,-0.5) -- (-1.0,\decy-\lo) -- (\decx,\decy-\lo);

    \draw[draw=black,very thick]
    (\encx,\ency+\lo) -- (0,1.25) -- (0.25,0.75)
    -- (0.5,0.0) -- (0.75,\decy+\lo) -- (\decx,\decy+\lo);

    \draw[draw=blue,dashed] 
    (\encx-\encw-\og,\decy-\dech-\og) -- 
    (\encx-\encw-\og,\ency+\ench+\og) -- 
    (\auxx+\auxw+\og,\ency+\ench+\og) -- %
    (\auxx+\auxw+\og,\decy-\lo+\og) --
    (\decx-\decw-\og,\decy-\lo+\og) --
    (\decx-\decw-\og,\decy+\dech+\og) --
    (\decx+\decw+\og,\decy+\dech+\og) -- 
    (\decx+\decw+\og,\decy-\dech-\og) -- 
    (\encx-\encw-\og,\decy-\dech-\og);

    \node[text=blue] (Z) at (\encx-\encw+0.1,\decy-\dech+0.1) {$\mc{Z}$};

    \draw[draw=black,dotted,thick]
    (\encx-\encw-2.0*\og,\ency+\ench+2.0*\og) rectangle
    (\decx+\decw+2.0*\og,\decy-\dech-0.6);

    \node[text=black] at (0,\decy-\dech-0.3) {$\mc{M}$};

    \draw [fill=white,draw=black]
    (\encx-\encw,\ency-\ench) rectangle
    (\encx+\encw,\ency+\ench);
    \node (enc) at (\encx,\ency) {$\mc{E}$};

    \draw [fill=white]
    (\auxx-\auxw,\auxy-\auxw) rectangle
    (\auxx+\auxw,\auxy+\auxw);
    \node (sc) at (\auxx,\auxy) {$\mc{F}$};

    \draw [fill=white]
    (\chanx-\chanw,\chany-\chanw) rectangle
    (\chanx+\chanw,\chany+\chanw);
    \node (chan) at (\chanx,\chany) {$\mc{N}$};

    \draw [fill=white]
    (\decx-\decw,\decy-\dech) rectangle
    (\decx+\decw,\decy+\dech);
    \node (dec) at (\decx,\decy) {$\mc{D}$};

\end{tikzpicture}
    \caption{A forward-assisted-code is used to transmit half
    of a maximally entangled state $\phi_{\ai\tilde{\ai}}$
    over a noisy channel $\mc{N}_{\bi\<\ao}$.
    We can regard the forward-assisted-code as a 
    deterministic supermap,
    taking $\mc{N}_{\bi\<\ao}$ to the operation
    $\mc{M}_{\bo\<\ai}$ (with the dotted outline),
    which acts on $\phi_{\ai\tilde{\ai}}$.
    This supermap is determined by the bipartite operation
    $\mc{Z}_{\ao\bo\<\ai\bi}$ with the dashed outline.}
    \label{codefig}
\end{figure}
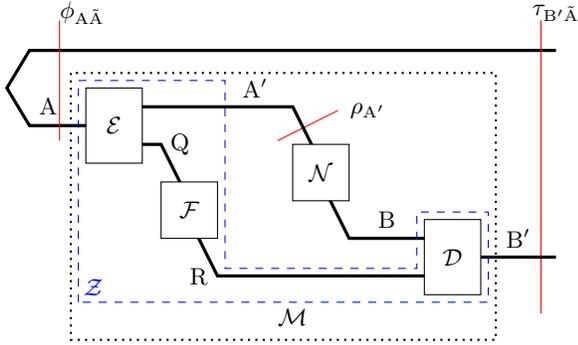

Consider the bipartite operation
\begin{equation}
\mc{Z}_{\ao\bo\<\ai\bi}
:= \mc{D}_{\bo\<\R\bi}\mc{F}_{\R\<\Q}\mc{E}_{\ao\Q\<\ai}\,,
\label{zop}
\end{equation}
which is outlined with dashes in Figure \ref{codefig}.
Using (\ref{choi2channel}),
its Choi matrix $Z_{\ao\bo\ai\bi}$ satisfies
\begin{equation}\label{Zchoi-eq}
    Z_{\ao\bo\ai\bi} =
    \tr_{\q\r} D_{\bo\bi\r} \t_{\r\<\r} F_{\r\q} \t_{\q\<\q}E_{\q\ao\ai}.
\end{equation}
Since this operation is implemented by local operations
and one-way quantum communication from Alice to Bob\footnote{
Such an operation is called ``semilocalisable'' in \cite{2001-BeckmanGottesmanNielsenPreskill}
},
it is non-signalling from Bob to Alice\footnote{Such an operation is called ``semicausal'' in
\cite{2001-BeckmanGottesmanNielsenPreskill}} \cite{2001-BeckmanGottesmanNielsenPreskill}
.
Conversely, \cite{2002-EggelingSchlingemannWerner} shows 
that any bipartite operation which is non-signalling
from Bob to Alice has an implementation by local operations
and one-way quantum communication from Alice to Bob.

In \cite{2008-ChiribellaDArianoPerinotti}, a {\em deterministic supermap}
$\mathfrak{M}$ is defined as a linear map from operations to
operations, such that tensoring $\mathfrak{M}$ with the identity
supermap still takes operations to operations.
In this language, the forward-assisted code
depicted in Figure \ref{codefig} constitutes 
a supermap from $\ops(\ao \to \bi)$ into $\ops(\ai \to \bo)$
\begin{equation}\label{op-map}
    \hspace*{-1.5ex} \mc{N}_{\bi\<\ao} \mapsto \mc{M}_{\bo\<\ai} 
  \; {=}\; \mc{D}_{\bo\<\R\bi}\mc{F}_{\R\<\Q}\mc{N}_{\bi\<\ao}\mc{E}_{\ao\Q\<\ai} \,.
\end{equation}
In \cite{2008-ChiribellaDArianoPerinotti}, it is shown that
any deterministic supermap from $\ops(\ao \to \bi)$
to $\ops(\ai \to \bo)$ can be implemented as in Figure \ref{codefig} and
eq.~(\ref{op-map}).
By expressing the Choi matrix $M_{\bo\ai}$
in terms of the Choi matrices of constituent operations using 
Eqs.~(\ref{choi2channel})-(\ref{choi}) and then using
Eq.~(\ref{Zchoi-eq}), one finds that
\[
    M_{\bo\ai} = \tr_{\ao\bi} Z_{\ao\bo\ai\bi} N_{\bi\ao}^{T}.
\] 
Therefore, the
action of a forward-assisted code, as a deterministic supermap,
is completely determined by the corresponding bipartite operation. In particular,
its channel fidelity is
\begin{equation}\label{EF-choi}
    K^{-1} \tr \phi_{\bo\ai} M_{\bo\ai} 
    = K^{-1} \tr \phi_{\bo\ai}
    Z_{\ao\bo\ai\bi} N_{\bi\ao}^{T}
\end{equation}
and its channel input state is
\begin{equation}\label{rho-choi}
    \rho_{\ao} =
    \tr_{\ai\bi\bo}
    Z_{\ao\bo\ai\bi} \1_{\ai} \1_{\bi}/\dim(\ai)\dim(\bi).
\end{equation}

Thus the set of forward-assisted codes of size $K$ for the channel
use $\mc{N}_{\bi\<\ao}$ corresponds precisely to
the set of deterministic supermaps from
$\ops(\ao \to \bi)$ to $\ops(\ai \to \bo)$, where $\dim(\ai) = \dim(\bo) = K$,
and thus to the set of 
bipartite operations $\ops(\ai:\bi \to \ao :\bo)$ which are non-signalling
from Bob to Alice. 

While the preceding discussion shows that the class of forward
assisted codes is mathematically natural to define, 
the class is too powerful to be interesting -- perfect performance is
trivially achieved for any $K$ and $\mc{N}_{\bi\<\ao}$, by 
choosing $\mc{F}_{\r\<\q}$ to be a $K$ dimensional quantum identity 
channel and by using $\mc{F}_{\r\<\q}$ to  transmit $\ai$ to Bob without even
using $\mc{N}_{\bi\<\ao}$. We now define several 
more interesting subclasses of the forward-assisted codes,
whose relationships are depicted in Figure \ref{codeclasses}.

The first three classes are operationally motivated -
that is they place further constraints on the way in which
the code can be implemented.
A conventional, unassisted quantum error correcting code
corresponds to not allowing \emph{any} forward assistance.
Equivalently, the operation $\mc{Z}_{\ao\bo\<\ai\bi}$ must have
the product form
$\mc{Z}_{\ao\bo\<\ai\bi} = \mc{D}_{\bo\<\bi}\mc{E}_{\ao\<\ai}$.
The operations $\mc{D}_{\bo\<\bi}$ and $\mc{E}_{\ao\<\ai}$ are still
arbitrary. We call this subclass \emph{unassisted codes} ($\UA$).
The strictly larger class of \emph{entanglement-assisted codes} ($\EA$)
corresponds to bipartite operations of the form
$\mc{Z}_{\ao\bo\<\ai\bi} =
\mc{D}_{\bo\<\bi\b}\mc{E}_{\ao\<\ai\a}\psi_{\a\b}$, where $\psi_{\a\b}$
can be any shared entangled state of arbitrary systems $\a$ and $\b$.
The class of forward-classical-assisted codes $\FCA$, is the
subclass of forward-assisted codes where we demand that
the auxiliary channel $\mc{F}_{\R\<\Q}$ is \emph{classical}. This means
that $\mc{F}_{\R\<\Q} \cdp_{\Q\<\Q} = \mc{F}_{\R\<\Q}$
and $\cdp_{\R\<\R} \mc{F}_{\R\<\Q}= \mc{F}_{\R\<\Q}$, where
$\cdp_{\Q\<\Q}$ denotes the completely dephasing operation in
the classical basis on $\Q$.

While the unassisted codes, the entanglement-assisted codes, and the
forward-classical-assisted codes possess clear operational
interpretations, they are generally difficult to optimise over.
Related classes that are more tractable to optimise are often 
studied instead.   

For both entanglement-assisted codes and unassisted codes,
the operation $\mc{Z}_{\ao\bo\<\ai\bi}$ is not only non-signalling
from Bob to Alice, but also from Alice to Bob. We call the subclass
of forward-assisted codes which is non-signalling from 
Alice to Bob the \emph{non-signalling codes} ($\NS$).
The transmission of classical data using classical channels
by non-signalling codes was first studied in \cite{CLMW2}.
In \cite{2012-Matthews}, the performance of non-signalling 
codes is used to provide a computationally tractable upper bound 
on unassisted classical codes over classical channels.   
The upper bound is equivalent to a powerful bound obtained using 
different methods in \cite{PPV}.

Unassisted codes and forward-classical-assisted codes satisfy 
a tractable constraint that $\mc{Z}_{\ao\bo\<\ai\bi}$ is 
PPT-preserving. 
We denote the subclass of forward-assisted codes that are
\emph{PPT-preserving} ``$\PPTp$''.
$\PPTp$ also contains forward-Horodecki-assisted
codes $\FHA$, consisting of 
forward-assisted codes where $\mc{F}_{\R\<\Q}$
is a Horodecki channel.
Since classical channels are Horodecki,
the class $\FHA$ contains $\FCA$.
We note that entanglement assisted codes are generally 
not PPT-preserving. The relationships between
the various classes of codes described
above are summarised in Figure \ref{codeclasses}.

\begin{definition}
    Let $\mF{\Omega}{\mc{N}}{K}$ denote the maximum channel fidelity
    $\F(\mc{M}_{\bo\<\ai})$ of operations
    $\mc{M}_{\bo\<\ai} \in \ops(\ai\to\bo)$
    with $\dim \ai = \dim \bo = K$
    which can be obtained by applying
    a forward-assisted code in class $\Omega$ to $\mc{N}_{\bi\<\ao}$.
\end{definition}

We can now define, for any class of codes $\Omega$,
the asymptotic \emph{quantum capacity} $Q^{\Omega}(\mc{N})$
of the \emph{memoryless} channel whose operation for
$n$ channel uses is $\mc{N}^{\ox n}$:
\begin{definition}
    \begin{equation}
        Q^{\Omega}(\mc{N})
        := \sup \{ r : \lim_{n \to \infty} F^{\Omega}(\mc{N}^{\ox n}, \lfloor 2^{rn} \rfloor ) = 1\}.
    \end{equation}
    We also define a corresponding \emph{zero-error} capacity by
    \begin{equation}
        Q_0^{\Omega}(\mc{N})
        := \sup_n \max \left\{ \frac{1}{n} \log_2 K_n
        : F^{\Omega}(\mc{N}^{\ox n}, K_n) = 1 \right\}.
    \end{equation}
\end{definition}
Given the results of \cite{2003-KretschmannWerner},
$Q^{\UA}(\mc{N})$ is equivalent to other definitions of
the (unassisted) quantum capacity $Q(\mc{N})$ of $\mc{N}$.
No ``single-letter'' formula for this quantity is known.
The best general expression we have for it is the regularised
coherent information formula of the
LSD Theorem~\cite{1997-Lloyd,2002-Shor,2005-Devetak}. 
$Q^{\EA}(\mc{N})$ is the entanglement-assisted capacity
of $\mc{N}$ for which we have the single-letter formula
of Bennett, Shor, Smolin and Thapliyal \cite{BSST}:
\begin{equation}\label{BSST}
    Q^{\EA}(\mc{N}_{\bi\<\ao})
    = \frac{1}{2} \max_{\rho_{\ao}}
    I(\mathrm{R}:\bi)_{\mc{N}_{\bi\<\ao} \rho_{\mathrm{R}\ao} }
\end{equation}
where $\rho_{\mathrm{R}\ao}$ is a purification of $\rho_{\ao}$
and $I(\mathrm{R}:\bi)_{\sigma_{\mathrm{R}\bi}} :=
S(\sigma_{\mathrm{R}}) + S(\sigma_{\bi}) - S(\sigma_{\mathrm{R}\bi})$,
where $S$ is the von Neumann entropy function.

The relationships between the classes of codes described in this
section imply the following inequalities:
\begin{align}
    &\mF{\UA}{\mc{N}}{K} \leq \mF{\EA}{\mc{N}}{K} \leq \mF{\NS}{\mc{N}}{K},\label{UAinEAinNS}\\
    \begin{split}
    &~~\mF{\UA}{\mc{N}}{K} \leq \mF{\FCA}{\mc{N}}{K}\\
    &~~~~~~~~\leq \mF{\FHA}{\mc{N}}{K} \leq \mF{\PPTp}{\mc{N}}{K},
    \end{split}\\[2ex]
\end{align}
Similar inequalities hold for the corresponding assisted capacities.

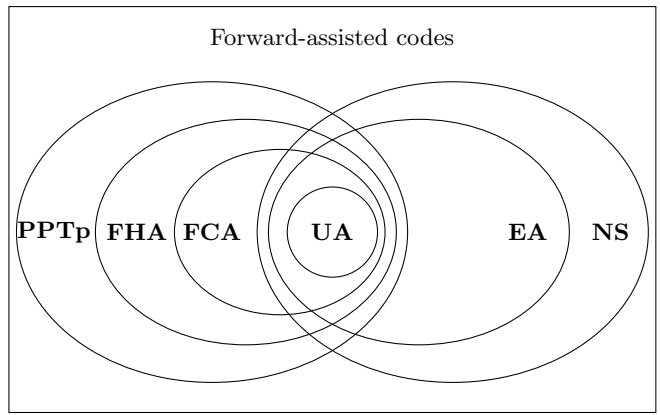
\begin{figure}
    \centering
    \begin{tikzpicture}
    \draw (-4.3,-2.4) rectangle (4.3,3);
    \node at (0,2.6) {Forward-assisted codes}; 

    \draw (0,0) ellipse (0.6cm and 0.6cm);   
    \node at (0,0) {$\UA$}; 
        
    
    \draw (1.15,0) ellipse (2.0cm and 1.5cm);
    \node at (2.6,0) {$\EA$}; 
    \draw (1.6,0) ellipse (2.6cm and 2cm); 
    \node at (3.7,0) {$\NS$}; 
    
    \draw (-0.7,0) ellipse (1.4cm and 1.1cm);
    \node at (-1.6,0) {$\FCA$}; 
    \draw (-1.15,0) ellipse (2.0cm and 1.5cm);
    \node at (-2.6,0) {$\FHA$}; 
    \draw (-1.6,0) ellipse (2.6cm and 2cm); 
    \node at (-3.7,0) {$\PPTp$}; 
    \end{tikzpicture}
    \caption{
        The relationship between various subclasses
        of forward-assisted codes:
        PPT-preserving codes $\PPTp$;
        forward-Horodecki-assisted codes $\FHA$;
        forward-classical-assisted codes $\FCA$;
        unassisted codes $\UA$;
        entanglement-assisted codes $\EA$;
        non-signalling codes $\NS$;
    }\label{codeclasses}
\end{figure}

In the next section, we show how the optimal channel fidelity
of forward-assisted codes which are non-signalling, PPT-preserving,
or both can be formulated as semidefinite programs (SDPs) \cite{VB-SDP,Watrous-lecture-notes}. SDPs have a number of
attractive qualities:
there are efficient algorithms for performing the optimising
numerically; feasible points to the dual programs yield
upper bounds on the optimal performance;
in many cases of interest, \emph{strong duality}
holds, so that dual solutions can certify optimality.

\section{Semidefinite programs for PPT-preserving and
non-signalling codes}\label{SDPs}

We have seen that the full set of forward-assisted codes
of size $K$ for the channel operation $\mc{N}_{\bi\<\ao}$
corresponds to those bipartite operations in $\ops(\ai\bi \to \ao\bo)$
which are non-signalling from Bob to Alice, where
$\dim(\ai) = \dim(\bo) = K$.
The corresponding set of Choi matrices are those satisfying
\begin{align}
    Z_{\ao\bo\ai\bi} &\geq 0,\label{ChoiCP}\\
    \tr_{\ao\bo} Z_{\ao\bo\ai\bi} &= \1_{\ai\bi},\label{ChoiTP}\\
    \tr_{\bo} Z_{\ao\bo\ai\bi} &= \tr_{\bo\bi} Z_{\ao\bo\ai\bi} / \dim(\bi).
    \label{ChoiNSBA2}
\end{align}
Here ($\ref{ChoiCP}$), ($\ref{ChoiTP}$)
are equivalent  to the operation being completely positive and trace
preserving, respectively.
The equality ($\ref{ChoiNSBA2}$) is the constraint that the
operation is non-signalling from Bob to Alice (see ($\ref{NSBA}$)).

The code is non-signalling (see (\ref{NSAB})) if and only if
\begin{equation}\label{NSAB2}
    \NS: \tr_{\ao} Z_{\ao\bo\ai\bi} =
    \tr_{\ao\ai} Z_{\ao\bo\ai\bi} / \dim(\ai),
\end{equation}
and PPT-preserving (see (\ref{ChoiPPT})) if and only if
\begin{equation}\label{ChoiPPT2}
    \PPTp: \t_{\bi\bo\<\bi\bo} Z_{\ao\bo\ai\bi} \geq 0.
\end{equation}

As noted earlier (eqn.~(\ref{EF-choi})),
the channel fidelity is given by
\begin{equation}\label{EF-choi2}
    f_c = K^{-1} \tr \phi_{\bo\ai}
    Z_{\ao\bo\ai\bi} N_{\bi\ao}^{T}.
\end{equation}
The problem is to maximize $f_c$ subject to 
(\ref{ChoiCP})-(\ref{ChoiNSBA2}),
with the additional constraints (\ref{NSAB2}), 
(\ref{ChoiPPT2}) as appropriate.

We begin by showing that we can, without loss of generality,
restrict our attention to a highly symmetric form of $Z_{\ao\bo\ai\bi}$.
Let $\bar{U}$ denote the complex conjugate of $U$,
and let $p$ denote the unique Haar probability measure on
the unitary group $U(K)$.
The channel fidelity  eq.~(\ref{EF-choi2}) satisfies
\begin{eqnarray}\label{EF-choi-twirled}
    & & K^{-1} \tr \phi_{\bo\ai}
    Z_{\ao\bo\ai\bi} N_{\bi\ao}^{T}
\nonumber
\\
    & = & K^{-1} \tr \int
         dp(U) U^\dagger_{\bo}U^T_{\ai} \phi_{\bo\ai} U_{\bo}
         \bar{U}_{\ai}
    Z_{\ao\bo\ai\bi} N_{\bi\ao}^{T}
\nonumber
\\
    & = & K^{-1} \tr 
         \phi_{\bo\ai}
         \bar{Z}_{\ao\bo\ai\bi}
         N_{\bi\ao}^{T},
\nonumber
\end{eqnarray}
where
\begin{equation}\label{twirled-Choi}
    \bar{Z}_{\ao\bo\ai\bi} := \int dp(U)
         U_{\bo} \bar{U}_{\ai}
         Z_{\ao\bo\ai\bi} 
         U^\dagger_{\bo}U^T_{\ai}.
\end{equation}
The first equality holds because  
    $U^\dagger_{\bo}U^T_{\ai}\ket{\phi}_{\bo\ai} = 
    \ket{\phi}_{\bo\ai}$ for all unitary operators $U$,
    by the `transpose trick' (Eq.\ (\ref{ttrick})).
The second equality follows from
the cyclic property and linearity of the trace.
If we define the `twirling' operation
\begin{equation}\label{twirling-op}
    \mc{T}_{\bo\ai\<\bo\ai}: X_{\bo\ai} \mapsto
    \int dp(U)
         U_{\bo} \bar{U}_{\ai}
         X_{\bo\ai} 
         U^\dagger_{\bo}U^T_{\ai},
\end{equation}
then $\bar{Z}_{\ao\bo\ai\bi} =
\id_{\bi\ao\<\bi\ao} \mc{T}_{\bo\ai\<\bo\ai}
Z_{\ao\bo\ai\bi}$.

Consider a general Choi matrix $N_{\R\Q}$ given by Eq.\ (\ref{choi}).
By the transpose trick, $W_\Q N_{\R\Q} W^\dagger_\Q$ is the Choi
matrix of the map that conjugates the input by $W^T$ before 
$\mc{N}_{\R\<\Q}$ acts.  Meanwhile, $W_\R N_{\R\Q} W^\dagger_\R$
is the Choi matrix of the map that first applies $\mc{N}_{\R\<\Q}$ 
before conjugation by $W_\R$.  
Therefore, the `twirled' operator in (\ref{twirled-Choi}) 
corresponds to the modified bipartite operation
$\bar{\mc{Z}}_{\ao\bo\<\ai\bi}[\cdot]
= 
\int dp(U)
U_{\bo}
\mc{Z}_{\ao\bo\<\ai\bi} [ U^{\dagger}_{\ai} \cdot U_{\ai} ] 
U^{\dagger}_{\bo}$.

The operation $\bar{\mc{Z}}_{\ao\bo\<\ai\bi}$
can be implemented as follows: Alice and Bob
share a classical random variable identifying a unitary $U$ drawn
according to the Haar measure $p$. Alice applies $U_{\ai}^{\dagger}$
to her input system $\ai$. Alice and Bob then use the forward assisted
code corresponding to $\mc{Z}$. Finally,
Bob applies $U_{\bo}$, inverting Alice's operation on the input.
Since $\mc{Z}_{\ao\bo\<\ai\bi}$ can be transformed to 
$\bar{\mc{Z}}_{\ao\bo\<\ai\bi}$ 
using local operations and shared randomness,
$\bar{\mc{Z}}_{\ao\bo\<\ai\bi}$ will be non-signalling
from Alice to Bob if $\mc{Z}_{\ao\bo\<\ai\bi}$ is,
and will be PPT-preserving if $\mc{Z}_{\ao\bo\<\ai\bi}$ is.

Equation (\ref{EF-choi-twirled})
tells us that, for any given $\mc{N}_{\bi\<\ao}$,
using the $\bar{\mc{Z}}_{\ao\bo\<\ai\bi}$ will yield the same
channel fidelity as using $\mc{Z}_{\ao\bo\<\ai\bi}$.
Therefore, there is no loss of generality in assuming
that the Choi matrix lies in the image of
the operation $\id_{\bi\ao\<\bi\ao} \mc{T}_{\bo\ai\<\bo\ai}$.

As shown in Rains \cite{2001-Rains-SDP},
the action of $\mc{T}_{\bo\ai\<\bo\ai}$ can also be written
\begin{equation}
    \begin{split} 
    &\mc{T}_{\bo\ai\<\bo\ai}: X_{\bo\ai} \mapsto
    \phi_{\bo\ai} \tr \phi_{\bo\ai} X_{\bo\ai}
    +\hspace*{5ex}\\
    &\hspace*{5ex}\frac{(\1_{\bo\ai} - \phi_{\bo\ai})}
    {\tr (\1_{\bo\ai} - \phi_{\bo\ai})}
    \tr (\1_{\bo\ai} - \phi_{\bo\ai}) X_{\bo\ai}.
    \end{split}
\end{equation}
Thus, $\bar{Z}_{\ao\bo\ai\bi}$ lies in the image of
$\id_{\bi\ao\<\bi\ao} \mc{T}_{\bo\ai\<\bo\ai}$
if and only if
    \begin{equation}\label{U-invar-form}
    \bar{Z}_{\ao\bo\ai\bi} = K(\phi_{\bo\ai} \sdpop_{\ao\bi}
    + (\1-\phi)_{\bo\ai} \sdpopQ_{\ao\bi}) \,,
\end{equation}
for some operators $\sdpop_{\ao\bi}$ and $\sdpopQ_{\ao\bi}$.
When we write $\sdpop$, $\sdpopQ$ subscripted with only $\ao$ 
or $\bi$, we refer to the partial traces of the operators, 
for example, $\sdpop_{\ao} {:}{=} \tr_{\bi} \sdpop_{\ao\bi}$.  
From (\ref{rho-choi}), we see that the modified forward-assisted code
(\ref{U-invar-form}) has channel input state
\begin{eqnarray}
    \rho_{\ao} & =& (K\dim(\bi))^{-1} \tr_{\bo\ai\bi} \bar{Z}_{\ao\bo\ai\bi}
\nonumber
\\
   & =& (\sdpop_{\ao} + (K^2-1)\sdpopQ_{\ao}) \dim(\bi)^{-1} \,.
\label{ave}
\end{eqnarray}

Expressing the constraints on $\bar{Z}$
in terms of $\sdpop_{\ao\bi}$ and $\rho_{\ao}$ gives 
the following theorem and corollary.  
\begin{theorem}\label{main-T}
        There is a forward-assisted code
        (see Figure \ref{codefig}) of size $\dM$,
        average channel input $\rho_{\ao}$ and channel fidelity
        $f_c$ for $\mc{N}_{\bi\<\ao}$ which is
        PPT preserving and/or non-signalling from Alice to Bob
        if and only if there exists an operator
        $\sdpop_{\ao\bi}$ such that
        \begin{align}
            &f_c = \tr N_{\ao\bi}^{\T} \sdpop_{\ao\bi} \label{fidelity}\\
            &\sdpop_{\ao\bi} \leq \rho_{\ao}\1_{\bi}\label{Qpos} \\
            &\sdpop_{\ao\bi} \geq 0\label{Rpos} \\
            \NS:& \sdpop_{\bi} =
            \1_{\bi}/\dM^2\label{NSABchoi} \\
            \PPTp:&
            \begin{cases}
                \t_{\bi\<\bi}[\sdpop_{\ao\bi}] \geq -\rho_{\ao}\1_{\bi}/K,\\
                \t_{\bi\<\bi}[\sdpop_{\ao\bi}] \leq \rho_{\ao}\1_{\bi}/K.
            \end{cases}
            \label{PPT-constraints}
        \end{align}
\end{theorem}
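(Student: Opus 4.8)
The plan is to substitute the $U$‑invariant form (\ref{U-invar-form}) of $\bar Z_{\ao\bo\ai\bi}$ into every constraint and into the objective, and then to eliminate the auxiliary operator $\sdpopQ_{\ao\bi}$. We may work with this form without loss of generality: twirling leaves the channel fidelity invariant (eq.~(\ref{EF-choi-twirled})) and, as noted in the discussion preceding the theorem, preserves membership in the PPT‑preserving and non‑signalling classes; it also leaves $\rho_\ao$ unchanged, since the twirl $\mc{T}_{\bo\ai\<\bo\ai}$ is trace preserving and hence commutes with the partial trace that defines $\rho_\ao$ in (\ref{rho-choi}). Throughout we use that $\phi_{\bo\ai}$ and $(\1-\phi)_{\bo\ai}$ are mutually orthogonal projectors, together with $\tr_{\bo}\phi_{\bo\ai}=\1_\ai/K$, $\tr_{\bo\ai}\phi_{\bo\ai}=1$, and the analogous identities with the roles of $\ai$ and $\bo$ exchanged.

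The crux is the interplay of (\ref{ChoiCP}) and (\ref{ChoiNSBA2}). Because $\bar Z_{\ao\bo\ai\bi}$ is block diagonal across the decomposition of $\Hs_{\bo\ai}$ induced by $\phi_{\bo\ai}$ and $\1_{\bo\ai}-\phi_{\bo\ai}$, the positivity constraint (\ref{ChoiCP}) is equivalent to $\sdpop_{\ao\bi}\geq 0$ \emph{and} $\sdpopQ_{\ao\bi}\geq 0$. Computing $\tr_{\bo}\bar Z_{\ao\bo\ai\bi}$ and $\tr_{\bo\bi}\bar Z_{\ao\bo\ai\bi}$ and using (\ref{ave}), the non‑signalling‑from‑Bob‑to‑Alice constraint (\ref{ChoiNSBA2}) collapses to the single operator equation $\sdpop_{\ao\bi}+(K^2-1)\sdpopQ_{\ao\bi}=\rho_\ao\1_\bi$, which we use to eliminate $\sdpopQ_{\ao\bi}=(\rho_\ao\1_\bi-\sdpop_{\ao\bi})/(K^2-1)$. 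Three consequences fall out: the trace‑preservation constraint (\ref{ChoiTP}) is then automatic (using $\tr\rho_\ao=1$); the remaining positivity $\sdpopQ_{\ao\bi}\geq 0$ becomes exactly (\ref{Qpos}); and, since $\phi_{\bo\ai}\bar Z_{\ao\bo\ai\bi}=K\phi_{\bo\ai}\sdpop_{\ao\bi}$, the objective (\ref{EF-choi2}) becomes (\ref{fidelity}) after invoking (\ref{usefulfact}).

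It remains to translate the two optional constraints, with $\sdpopQ_{\ao\bi}$ now expressed through $\sdpop_{\ao\bi}$ and $\rho_\ao$. For non‑signalling from Alice to Bob one computes $\tr_{\ao}\bar Z_{\ao\bo\ai\bi}$ and $\tr_{\ao\ai}\bar Z_{\ao\bo\ai\bi}$; matching the $\phi_{\bo\ai}$‑block of (\ref{NSAB2}) gives $\sdpop_\bi=\1_\bi/K^2$, which is (\ref{NSABchoi}), while the $(\1-\phi)_{\bo\ai}$‑block only reproduces a consistent consequence. For the PPT‑preserving constraint (\ref{ChoiPPT2}) the key identity is $\t_{\bo\<\bo}\phi_{\bo\ai}=\flip_{\bo\ai}/K$, with $\flip_{\bo\ai}$ the swap on $\Hs_{\bo\ai}$: transposing on $\bo$ recombines $\phi_{\bo\ai}$ and $\1_{\bo\ai}-\phi_{\bo\ai}$ into multiples of the symmetric and antisymmetric projectors on $\Hs_{\bo\ai}$, so $\t_{\bi\bo\<\bi\bo}\bar Z_{\ao\bo\ai\bi}$ is again block diagonal, with blocks $\t_{\bi\<\bi}[\sdpop_{\ao\bi}+(K-1)\sdpopQ_{\ao\bi}]$ and $\t_{\bi\<\bi}[-\sdpop_{\ao\bi}+(K+1)\sdpopQ_{\ao\bi}]$ on $\Hs_{\ao\bi}$; requiring both to be positive and substituting the expression for $\sdpopQ_{\ao\bi}$ (using $(K-1)(K+1)=K^2-1$ and $\t_{\bi\<\bi}[\rho_\ao\1_\bi]=\rho_\ao\1_\bi$) yields precisely the two inequalities of (\ref{PPT-constraints}).

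Finally the two directions are assembled. For ``only if'', given a code in the relevant class with channel input $\rho_\ao$ and fidelity $f_c$, twirl it into the form (\ref{U-invar-form}) and read off $\sdpop_{\ao\bi}$ as the corresponding block of $\bar Z_{\ao\bo\ai\bi}/K$; the computations above show it meets the stated constraints. For ``if'', given $\sdpop_{\ao\bi}$ satisfying the constraints, set $\sdpopQ_{\ao\bi}:=(\rho_\ao\1_\bi-\sdpop_{\ao\bi})/(K^2-1)$ and define $\bar Z_{\ao\bo\ai\bi}$ by (\ref{U-invar-form}); the same computations show $\bar Z_{\ao\bo\ai\bi}$ satisfies (\ref{ChoiCP})--(\ref{ChoiNSBA2}) — so by the correspondence of Section~\ref{code-classes} it is the Choi matrix of a forward‑assisted code — has channel input $\rho_\ao$ and channel fidelity $f_c$, and satisfies (\ref{NSAB2}) and/or (\ref{ChoiPPT2}) as required. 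I expect the PPT step to be the main obstacle: carrying out the partial transpose on the \emph{pair} $\bo\bi$ correctly, recognizing the recombination of the $\phi_{\bo\ai}$‑blocks into symmetric and antisymmetric blocks via $\t_{\bo\<\bo}\phi_{\bo\ai}=\flip_{\bo\ai}/K$, and then the bookkeeping that turns two separate positivity conditions into the single two‑sided operator inequality on $\t_{\bi\<\bi}\sdpop_{\ao\bi}$.
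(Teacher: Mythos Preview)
Your proposal is correct and follows essentially the same route as the paper: substitute the twirled form (\ref{U-invar-form}), use the block-diagonality over $\phi_{\bo\ai}$ and $(\1-\phi)_{\bo\ai}$ to split (\ref{ChoiCP}), turn (\ref{ChoiNSBA2}) into $\sdpop_{\ao\bi}+(K^2-1)\sdpopQ_{\ao\bi}=\rho_{\ao}\1_{\bi}$ and use it to eliminate $\sdpopQ_{\ao\bi}$, then handle the $\NS$ and $\PPTp$ constraints exactly as the paper does (your $\t_{\bo\<\bo}\phi_{\bo\ai}=\flip_{\bo\ai}/K$ is just the paper's $(\sym_{\bo\ai}-\asym_{\bo\ai})/K$). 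Your explicit treatment of both implication directions and the remark that twirling fixes $\rho_{\ao}$ are welcome additions, and the reference to (\ref{usefulfact}) in deriving (\ref{fidelity}) is harmless though not really needed.
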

\begin{corollary}\label{primal-SDP}
        To obtain $\mF{\NS}{\mc{N}}{K}$ or $\mF{\PPTp}{\mc{N}}{K}$
        we maximise the expression (\ref{fidelity})
        subject to either (\ref{NSABchoi}) or (\ref{PPT-constraints}),
        as appropriate, in addition to
        the constraints (\ref{Rpos}), (\ref{Qpos}),
        $\rho_{\ao} \geq 0$, and $\tr \rho_{\ao} = 1$.
        If we impose both (\ref{NSABchoi}) and (\ref{PPT-constraints})
        we obtain $\mF{\NS \cap \PPTp}{\mc{N}}{K}$
        In all three cases, the optimisation is a semidefinite program.
    \end{corollary}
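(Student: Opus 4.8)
The plan is to derive Corollary~\ref{primal-SDP} as a direct consequence of Theorem~\ref{main-T} and the definition of $\mF{\Omega}{\mc{N}}{\dM}$, followed by a routine check that the resulting optimisations have the standard form of a semidefinite program. Recall from Section~\ref{code-classes} that every forward-assisted code is non-signalling from Bob to Alice, so a forward-assisted code of size $\dM$ belongs to $\NS$ (respectively $\PPTp$, respectively $\NS\cap\PPTp$) exactly when its bipartite operation $\mc{Z}_{\ao\bo\<\ai\bi}$ is additionally non-signalling from Alice to Bob (respectively PPT-preserving, respectively both), with no further restriction. Theorem~\ref{main-T} asserts that, for a prescribed channel input density operator $\rho_{\ao}$ and a prescribed value $f_c$, such a code with average channel input $\rho_{\ao}$ and channel fidelity $f_c$ exists if and only if there is an operator $\sdpop_{\ao\bi}$ satisfying~(\ref{fidelity}),~(\ref{Qpos}),~(\ref{Rpos}), together with~(\ref{NSABchoi}) when the Alice-to-Bob non-signalling constraint is imposed and~(\ref{PPT-constraints}) when the PPT-preserving constraint is imposed.

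First I would push the definition of $\mF{\Omega}{\mc{N}}{\dM}$ through this equivalence. Reading Theorem~\ref{main-T} from left to right, every channel fidelity $f_c$ achieved by a size-$\dM$ code in class $\Omega$ is of the form $\tr N_{\ao\bi}^{\T}\sdpop_{\ao\bi}$ for some pair $(\rho_{\ao},\sdpop_{\ao\bi})$ with $\rho_{\ao}\geq 0$, $\tr\rho_{\ao}=1$, and $\sdpop_{\ao\bi}$ obeying~(\ref{Qpos}),~(\ref{Rpos}) and the $\Omega$-appropriate choice among~(\ref{NSABchoi}),~(\ref{PPT-constraints}); reading it from right to left, every such feasible pair is realised by an actual code in class $\Omega$ with channel input $\rho_{\ao}$. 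Consequently $\mF{\Omega}{\mc{N}}{\dM}$, defined as the supremum of the channel fidelity over size-$\dM$ codes in $\Omega$, equals the supremum of the expression~(\ref{fidelity}) over all $(\rho_{\ao},\sdpop_{\ao\bi})$ satisfying~(\ref{Rpos}),~(\ref{Qpos}), $\rho_{\ao}\geq 0$, $\tr\rho_{\ao}=1$, and the selected constraint(s)~(\ref{NSABchoi}) and/or~(\ref{PPT-constraints}); taking both of these gives $\mF{\NS\cap\PPTp}{\mc{N}}{\dM}$ by the identical argument applied to the intersection class.

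Next I would verify that each of the three maximisations is an SDP and that its supremum is attained. The objective $\tr N_{\ao\bi}^{\T}\sdpop_{\ao\bi}$ is linear in $\sdpop_{\ao\bi}$ and real (since $N_{\ao\bi}$, hence $N_{\ao\bi}^{\T}$, is Hermitian). The remaining conditions are either linear equalities --- $\tr\rho_{\ao}=1$ and, in the $\NS$ case, $\sdpop_{\bi}=\1_{\bi}/\dM^{2}$, where $\sdpop_{\bi}:=\tr_{\ao}\sdpop_{\ao\bi}$ is linear in $\sdpop_{\ao\bi}$ --- or positive-semidefiniteness of operators depending affinely on $(\rho_{\ao},\sdpop_{\ao\bi})$, namely $\sdpop_{\ao\bi}\geq 0$, $\rho_{\ao}\1_{\bi}-\sdpop_{\ao\bi}\geq 0$, $\rho_{\ao}\geq 0$, and, in the $\PPTp$ case, $\rho_{\ao}\1_{\bi}/\dM\pm\t_{\bi\<\bi}[\sdpop_{\ao\bi}]\geq 0$ (affine because the partial transpose $\t_{\bi\<\bi}$ is a linear map). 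These constraints already force $\rho_{\ao}$ and $\sdpop_{\ao\bi}$ to be Hermitian, so the problem is posed over the real vector space of Hermitian operators and is a bona fide SDP in $(\rho_{\ao},\sdpop_{\ao\bi})$. Finally the feasible set is nonempty --- each class contains the unassisted codes, which by Theorem~\ref{main-T} furnish feasible points --- and compact, being closed and bounded (from $0\leq\sdpop_{\ao\bi}\leq\rho_{\ao}\1_{\bi}$ with $\rho_{\ao}\geq 0$, $\tr\rho_{\ao}=1$ one gets that all eigenvalues of $\sdpop_{\ao\bi}$ and of $\rho_{\ao}$ lie in $[0,1]$), so each supremum is attained and may be written as a maximum. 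There is no genuine obstacle here --- the substance is entirely in Theorem~\ref{main-T}; the only points needing care are the bookkeeping that the code classes $\NS$, $\PPTp$ and $\NS\cap\PPTp$ correspond to precisely the constraint sets appearing in that theorem, and the elementary compactness argument that turns the defining supremum into a maximum.
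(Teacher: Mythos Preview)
Your proposal is correct and follows the paper's approach: in the paper the proof of Theorem~\ref{main-T} and Corollary~\ref{primal-SDP} are given together, with the corollary treated as the immediate consequence of the theorem obtained by optimising over the feasible pairs $(\rho_{\ao},\sdpop_{\ao\bi})$. Your write-up is in fact more explicit than the paper's on two points the authors leave implicit --- the verification that the constraints are affine/LMI so the problem is a genuine SDP, and the compactness argument ensuring the supremum is attained --- but the underlying route is the same.
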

\begin{proof}
We begin by deriving the expression for the channel fidelity 
(\ref{fidelity}).  
It follows by substituting (\ref{U-invar-form}) into (\ref{EF-choi2}) and 
using
$(\1{-}\phi)_{\bo\ai}\phi_{\bo\ai} = 0$
and $\phi_{\bo\ai} \phi_{\bo\ai} = \phi_{\bo\ai}$.

We next consider the constraints (\ref{ChoiCP})-(\ref{ChoiNSBA2}).
Using Eqs.\ (\ref{U-invar-form}) and (\ref{ave}), we see that
(\ref{ChoiNSBA2}) is
equivalent to
\begin{equation}\label{NSBAtwirled}
    \sdpop_{\ao\bi} + (K^2-1)\sdpopQ_{\ao\bi}
    = \rho_{\ao}\1_{\bi}.
\end{equation}
We will use this relation to eliminate
$\sdpopQ_{\ao\bi}$ in the other constraints.
Substituting (\ref{U-invar-form})
into the `trace preserving' constraint (\ref{ChoiTP}), we obtain
\begin{equation}\label{nlz}
    \sdpop_{\bi} + (K^2 - 1) \sdpopQ_{\bi} = \1_{\bi} \,. 
\end{equation}
Note that eq.~(\ref{nlz}) is already implied by (\ref{NSBAtwirled}).  

Since $(\1-\phi)_{\bo\ai}$ and $\phi_{\bo\ai}$ are
positive-semidefinite operators supported on orthogonal subspaces,
$\bar{Z}_{\ao\bo\ai\bi}$ in eq.~(\ref{U-invar-form}) satisfies
the complete positivity constraint (\ref{ChoiCP})
if and only if $\sdpop_{\ao\bi}
\geq 0$ and $\sdpopQ_{\ao\bi} \geq 0$.
The first of these is constraint $(\ref{Qpos})$,
and $(\ref{Rpos})$ is obtained by using
(\ref{NSBAtwirled}) to substitute for $\sdpopQ_{\ao\bi}$
in the latter.

Now, if we want our forward-assisted code to be
non-signalling from Alice to Bob (satisfying (\ref{NSAB2}))
then, by eqs.~(\ref{U-invar-form}) and (\ref{nlz}), 
this is equivalent to
\begin{equation} 
K (\phi_{\bo\ai} + \sdpop_{\bi} + (\1-\phi)_{\bo\ai} \sdpopQ_{\bi}) 
= K^{-1} \1_{\ai \bo \bi} \,. 
\end{equation}
Eliminating $\sdpopQ_{\bi}$ using (\ref{nlz}), the above
holds if and only if $\sdpop_{\bi} = \1_{\bi}/\dM^2$, which is
constraint (\ref{NSABchoi}) in our Theorem.

Finally, we can show that $(\ref{U-invar-form})$ is PPT-preserving
(constraint (\ref{ChoiPPT2}))
if and only if conditions ($\ref{PPT-constraints}$) hold, in a way
similar to Rains \cite{2001-Rains-SDP}.  To see this, apply 
$\t_{\bi\bo\<\bi\bo}$ to both sides of $(\ref{U-invar-form})$.
Using the fact that
$\t_{\bo\<\bo} \phi_{\bo\ai} = (\sym_{\bo\ai} - \asym_{\bo\ai})/K$ and
$\sym_{\bo\ai} + \asym_{\bo\ai} = \1_{\bo\ai}$, where
$\sym_{\bo\ai}$ and $\asym_{\bo\ai}$ are the projectors onto the
symmetric and the antisymmetric subspaces of
$\mc{H}_{\ai}\ox\mc{H}_{\bo}$ respectively, one obtains 

    \begin{equation}
        \begin{split}
           \t_{\bi\bo\<\bi\bo}
        [\bar{Z}_{\ao\bo\ai\bi}]
        & =
        \sym_{\bo\ai}(\t_{\bi\<\bi}[\sdpop_{\ao\bi}
        + (K{-}1) \sdpopQ_{\ao\bi}]) 
\nonumber
        \\
        & + 
        \asym_{\bo\ai}(\t_{\bi\<\bi}[-\sdpop_{\ao\bi}
        + (K{+}1) \sdpopQ_{\ao\bi}]) .
\nonumber
        \end{split}
    \end{equation}
    Using the fact that $\sym_{\bo\ai}$ and $\asym_{\bo\ai}$ are
    orthogonal projectors, this last expression
    is positive semidefinite if and only if 
$\t_{\bi\<\bi}[\sdpop_{\ao\bi}
        + (K{-}1) \sdpopQ_{\ao\bi}] \geq 0$ and 
$\t_{\bi\<\bi}[-\sdpop_{\ao\bi}
        + (K{+}1) \sdpopQ_{\ao\bi}] \geq 0$.  
Eliminating $\sdpopQ_{\ao\bi}$ using ($\ref{NSBAtwirled}$)
in these two conditions gives ($\ref{PPT-constraints}$).
\end{proof}


We now derive the dual semidefinite program for the entanglement
fidelity achieved by a forward-assisted code that is
PPT-preserving and/or non-signalling,
using Lagrange multipliers.  The weak duality theorem states 
that the value of the dual program attained at any dual feasible
solution is at least the value of the primal program at any primal
feasible solution. Interested readers can consult
\cite{VB-SDP,Watrous-lecture-notes}.

    \begin{proposition}
        The dual semidefinite program for
        $\mF{\NS \cap \PPTp}{\mc{N}}{K}$
        is to
        minimise $\mu + K^{-2} \tr W_{\bi}$ subject to
        \begin{align}
            N_{\ao\bi}^{T} + \t_{\bi\<\bi} \Omega_{\ao\bi}
            \leq X_{\ao\bi} + \1_{\ao}W_{\bi},
\label{dual1}           
\\
            \tr_{\bi} (X_{\ao\bi} + K^{-1}|\Omega_{\ao\bi}|)
            \leq \mu \1_{\ao},
\label{dual2}           
\\
            X_{\ao\bi} \geq 0 \,.
\label{dual3}           
        \end{align}
        To remove the PPT constraint, set $\Omega_{\ao\bi} = 0$.
        To remove the non-signalling constraint, set $W_{\bi} = 0$.
    \end{proposition}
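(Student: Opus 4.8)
The plan is to derive the program by Lagrangian duality from the primal semidefinite program of Corollary~\ref{primal-SDP} in the case where both (\ref{NSABchoi}) and (\ref{PPT-constraints}) are imposed, and then to obtain the two one-sided cases by deleting the corresponding multipliers. Introduce a positive-semidefinite multiplier $X_{\ao\bi}$ for (\ref{Qpos}), a positive-semidefinite $Y_{\ao\bi}$ for (\ref{Rpos}), a Hermitian $W_{\bi}$ for the equality (\ref{NSABchoi}), positive-semidefinite $P_{\ao\bi}$ and $R_{\ao\bi}$ for the lower and upper inequalities in (\ref{PPT-constraints}), a positive-semidefinite $S_{\ao}$ for $\rho_{\ao}\ge 0$, and a real $\mu$ for $\tr\rho_{\ao}=1$. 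Form the Lagrangian $L$, moving all partial transposes onto the multipliers via eq.~(\ref{usefulfact}) (so that e.g.\ $\tr P_{\ao\bi}\t_{\bi\<\bi}[\sdpop_{\ao\bi}]=\tr \t_{\bi\<\bi}[P_{\ao\bi}]\,\sdpop_{\ao\bi}$) and rewriting $\tr W_{\bi}\tr_{\ao}\sdpop_{\ao\bi}=\tr(\1_{\ao}W_{\bi})\sdpop_{\ao\bi}$, so that $L$ is affine in $\sdpop_{\ao\bi}$ and in $\rho_{\ao}$. Taking the supremum over Hermitian $\sdpop_{\ao\bi}$ and $\rho_{\ao}$, finiteness forces the coefficient of $\sdpop_{\ao\bi}$ and the coefficient of $\rho_{\ao}$ to vanish. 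Writing $\Omega_{\ao\bi}:=P_{\ao\bi}-R_{\ao\bi}$, the first condition (after absorbing a sign into the free multiplier $W_{\bi}$, and using $Y_{\ao\bi}\ge0$) becomes $N_{\ao\bi}^{\T}+\t_{\bi\<\bi}[\Omega_{\ao\bi}]\le X_{\ao\bi}+\1_{\ao}W_{\bi}$, which with $X_{\ao\bi}\ge0$ is (\ref{dual1}) and (\ref{dual3}); the second condition (after using $S_{\ao}\ge0$ to relax it, and absorbing a sign into $\mu$) becomes $\tr_{\bi}\!\big(X_{\ao\bi}+K^{-1}(P_{\ao\bi}+R_{\ao\bi})\big)\le\mu\1_{\ao}$, and the constant part of $L$ is $\mu+K^{-2}\tr W_{\bi}$.

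It then remains to see that $P_{\ao\bi}+R_{\ao\bi}$ may be replaced by $|\Omega_{\ao\bi}|$, i.e.\ that the second condition is (\ref{dual2}); the cleanest route, which simultaneously discharges the weak-duality statement the surrounding text needs, is to verify the associated inequality chain directly. Let $(\sdpop_{\ao\bi},\rho_{\ao})$ be primal-feasible and let $(X_{\ao\bi},\Omega_{\ao\bi},W_{\bi},\mu)$ satisfy (\ref{dual1})--(\ref{dual3}). Using (\ref{dual1}) together with $\sdpop_{\ao\bi}\ge0$ from (\ref{Rpos}),
\[
\tr N_{\ao\bi}^{\T}\sdpop_{\ao\bi}\le\tr X_{\ao\bi}\sdpop_{\ao\bi}+\tr W_{\bi}\sdpop_{\bi}-\tr\Omega_{\ao\bi}\,\t_{\bi\<\bi}[\sdpop_{\ao\bi}].
\]
By (\ref{Qpos}) and $X_{\ao\bi}\ge0$ the first term is at most $\tr(\tr_{\bi}X_{\ao\bi})\rho_{\ao}$; by (\ref{NSABchoi}) the second equals $K^{-2}\tr W_{\bi}$; and splitting $\Omega_{\ao\bi}=\Omega_{+}-\Omega_{-}$ into its positive and negative parts, bounding $-\tr\Omega_{+}\t_{\bi\<\bi}[\sdpop_{\ao\bi}]$ with the first inequality of (\ref{PPT-constraints}) and $\tr\Omega_{-}\t_{\bi\<\bi}[\sdpop_{\ao\bi}]$ with the second, the third term is at most $K^{-1}\tr(\tr_{\bi}|\Omega_{\ao\bi}|)\rho_{\ao}$. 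Hence $\tr N_{\ao\bi}^{\T}\sdpop_{\ao\bi}\le\tr\big(\tr_{\bi}(X_{\ao\bi}+K^{-1}|\Omega_{\ao\bi}|)\big)\rho_{\ao}+K^{-2}\tr W_{\bi}\le\mu+K^{-2}\tr W_{\bi}$, using (\ref{dual2}) with $\rho_{\ao}\ge0$ and $\tr\rho_{\ao}=1$. Deleting the multipliers $P_{\ao\bi},R_{\ao\bi}$ (equivalently setting $\Omega_{\ao\bi}=0$) handles the case with only (\ref{NSABchoi}) present, and deleting $W_{\bi}$ (setting $W_{\bi}=0$) handles the case with only (\ref{PPT-constraints}); the same chain applies with the relevant term absent.

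I expect the step causing the most trouble to be the treatment of the two-sided constraint (\ref{PPT-constraints}): one must carry two separate multipliers, observe that they enter the dual only through their difference $\Omega_{\ao\bi}$ (in (\ref{dual1})) and their sum (in (\ref{dual2})), and show that the sum can be taken to equal $|\Omega_{\ao\bi}|$ -- which is exactly what the positive/negative-part decomposition accomplishes, since pairing $\Omega_{+}$ against the lower bound and $\Omega_{-}$ against the upper bound in (\ref{PPT-constraints}) is what produces the term $\tr_{\bi}|\Omega_{\ao\bi}|$. A secondary, more routine nuisance is the partial-transpose and partial-trace bookkeeping governed by eq.~(\ref{usefulfact}). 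Finally, to upgrade weak duality to the statement that the dual optimum equals $\mF{\NS\cap\PPTp}{\mc N}{K}$ one needs strong duality, which follows from a Slater point for the primal of Corollary~\ref{primal-SDP} (a strictly feasible choice of $\sdpop_{\ao\bi}$ and $\rho_{\ao}$); I would record this as a remark rather than belabour it.
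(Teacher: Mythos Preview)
Your proposal is correct and follows the same Lagrangian-duality route as the paper. The paper introduces exactly the multipliers $X_{\ao\bi}$, $W_{\bi}$, $Y_{\ao\bi}$, $V_{\ao\bi}$ (your $P,R$), and $\mu$, collects the Lagrangian into terms linear in $\sdpop_{\ao\bi}$ and $\rho_{\ao}$, and reads off the dual; it differs only in that it treats $\sdpop_{\ao\bi}\ge 0$ and $\rho_{\ao}\ge 0$ as domain (cone) constraints rather than introducing your extra multipliers $Y_{\ao\bi}$ and $S_{\ao}$, which is a cosmetic difference since you immediately eliminate them.

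The one substantive stylistic divergence is in how the replacement of $P_{\ao\bi}+R_{\ao\bi}$ by $|\Omega_{\ao\bi}|$ is argued. The paper dispatches it in one line: since $|\Omega_{\ao\bi}|\le P_{\ao\bi}+R_{\ao\bi}$ with equality attained by the choice $P=\Omega_+$, $R=\Omega_-$, the dual with $|\Omega_{\ao\bi}|$ in (\ref{dual2}) has the same optimum as the dual with $P+R$. Your route instead verifies the weak-duality inequality chain for the final program directly, pairing $\Omega_+$ against the lower PPT bound and $\Omega_-$ against the upper one. This is correct and has the virtue of making the bound explicit, but note that by itself the chain only shows the $|\Omega|$-program upper-bounds the primal; to conclude it coincides with the Lagrangian dual you still need the observation that $(X,\Omega,W,\mu)\mapsto(X,\Omega_+,\Omega_-,W,\mu)$ embeds its feasible set into that of the $P+R$ dual with equal objective (which you allude to in your final paragraph). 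The paper's one-line argument gives both directions at once and is shorter; your chain is a useful complement if one wants an explicit weak-duality proof.
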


    \begin{proof}
        We associate a positive-semidefinite Lagrange multiplier for
        each inequality constraint, and a hermitian Lagrange
        multiplier to each equality constraint.  In particular, we
        associate the operator $X_{\ao\bi} \geq 0$ to the constraint
        (\ref{Qpos}), a hermitian $W_{\bi}$ to non-signalling
        constraint (\ref{NSABchoi}), positive semidefinite 
        $Y_{\ao\bi}, V_{\ao\bi}$  
        to the PPT-preserving constraints
        (\ref{PPT-constraints}),  and a real multiplier $\mu$
        to the constraint that $\tr
        \rho_{\ao} = 1$.  The resulting Lagrangian is 
        \begin{align*}
            &\tr N^T_{\ao\bi} \sdpop_{\ao\bi} \\
        +& \tr X_{\ao\bi}(\rho_{\ao}\1_{\bi} - \sdpop_{\ao\bi})\\
        +& \tr Y_{\ao\bi}(\rho_{\ao}\1_{\bi}/K +
         \t_{\bi\<\bi} \sdpop_{\ao\bi} )\\
        +& \tr V_{\ao\bi}(\rho_{\ao}\1_{\bi}/K - 
        \t_{\bi\<\bi} \sdpop_{\ao\bi} )\\
        +& \tr \1_{\ao}W_{\bi} (\dim(\ao)^{-1} K^{-2} \1_{\ao\bi}
        - \sdpop_{\ao\bi})\\
        +& \mu(1-\tr \rho_{\ao})\\
        =& \tr \sdpop_{\ao\bi}(N^T_{\ao\bi} - X_{\ao\bi}
        + \t_{\bi\<\bi}[Y_{\ao\bi}{-}V_{\ao\bi}] 
        - \1_{\ao}W_{\bi})\\
        +& \tr \rho_{\ao}
        (\tr_{\bi}
        [X_{\ao\bi} + K^{-1}(Y_{\ao\bi} + V_{\ao\bi})]
        - \mu\1_{\ao})\\
        +& \mu + K^{-2} \tr W_{\bi}.
        \end{align*}
        The dual SDP is to minimise
        $\mu + K^{-2} \tr W_{\bi}$
        subject to
        \begin{align}
            N_{\ao\bi}^{T} + \t_{\bi\<\bi}[Y_{\ao\bi} - V_{\ao\bi}]
            \leq X_{\ao\bi} + \1_{\ao}W_{\bi},
\\
            \tr_{\bi} (X_{\ao\bi} + K^{-1}(Y_{\ao\bi} + V_{\ao\bi}))
            \leq \mu \1_{\ao},
\\
            X_{\ao\bi}, Y_{\ao\bi}, V_{\ao\bi} \geq 0 \,.
        \end{align}
        Let $\Omega_{\ao\bi} := Y_{\ao\bi}-V_{\ao\bi}$,
        then $|\Omega_{\ao\bi}| \leq Y_{\ao\bi} + V_{\ao\bi}$,
        and this can be made an equality by choosing
        $Y_{\ao\bi} = (|\Omega_{\ao\bi}| + \Omega_{\ao\bi})/2$
        and
        $V_{\ao\bi} = (|\Omega_{\ao\bi}| - \Omega_{\ao\bi})/2$,
        without loss of generality.

        Finally, to eliminate a constraint from the primal,
        we impose the additional constraint in the dual
        that the associated multiplier(s) be set to zero.
    \end{proof}
    
    An easy consequence of the dual for PPT-preserving codes
    is that their performance over Horodecki channels is no
    better than their performance over completely useless
    channels:
    \begin{proposition}
        The channel fidelity of a PPT-preserving 
        code for sending the state of a $K$-dimensional system over
        any Horodecki channel $\mc{N}^{H}$ is $1/K$ i.e.
        $\mF{\PPTp}{\mc{N}^{H}}{K} = 1/K$.
    \end{proposition}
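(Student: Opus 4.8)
The plan is to establish the two bounds $\mF{\PPTp}{\mc{N}^{H}}{K} \ge 1/K$ and $\mF{\PPTp}{\mc{N}^{H}}{K} \le 1/K$ separately; the first is operational and the second is a one-line application of the dual SDP of the previous Proposition.

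For the lower bound I would argue that $\PPTp$ contains the forward-classical-assisted codes $\FCA$ (since classical channels are Horodecki, $\FCA \subseteq \FHA \subseteq \PPTp$), and such a code can ignore $\mc{N}^{H}$ entirely: Alice measures her input in the computational basis, transmits the outcome over the classical side channel $\mc{F}_{\R\<\Q}$, and Bob reprepares the matching basis state. The resulting operation $\mc{M}_{\bo\<\ai}$ is the completely dephasing channel, whose channel fidelity is $1/K$ by a direct computation. Hence $\mF{\PPTp}{\mc{N}^{H}}{K} \ge \mF{\FCA}{\mc{N}^{H}}{K} \ge 1/K$. (Equivalently, one can exhibit the primal-feasible point $\sdpop_{\ao\bi} = \1_{\ao\bi}/(K\dim(\ao))$, $\rho_{\ao} = \1_{\ao}/\dim(\ao)$, which satisfies all constraints of Corollary~\ref{primal-SDP} together with the PPT constraints and attains value $\tr N^{T}_{\ao\bi}\sdpop_{\ao\bi} = (K\dim(\ao))^{-1}\tr N_{\ao\bi} = 1/K$, using $\tr N_{\ao\bi} = \dim(\ao)$.)

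For the upper bound I would produce a feasible point of the dual program with the non-signalling multiplier switched off ($W_{\bi} = 0$) and value $1/K$. The point of leverage is that $\mc{N}^{H}$ being Horodecki means its Choi matrix is PPT, so $\t_{\ao\<\ao} N_{\ao\bi} \ge 0$. I would then set $X_{\ao\bi} = 0$, $\mu = 1/K$, and $\Omega_{\ao\bi} = -\t_{\ao\<\ao} N_{\ao\bi}$. The constraint $X_{\ao\bi} \ge 0$ is immediate; for (\ref{dual1}), using that the partial transpose is an involution commuting across factors, $N^{T}_{\ao\bi} + \t_{\bi\<\bi}\Omega_{\ao\bi} = N^{T}_{\ao\bi} - N^{T}_{\ao\bi} = 0 = X_{\ao\bi} + \1_{\ao} W_{\bi}$; and since $\Omega_{\ao\bi} \le 0$ with $|\Omega_{\ao\bi}| = \t_{\ao\<\ao} N_{\ao\bi}$, constraint (\ref{dual2}) becomes $K^{-1}\tr_{\bi}\t_{\ao\<\ao} N_{\ao\bi} = K^{-1}\t_{\ao\<\ao}(\tr_{\bi} N_{\ao\bi}) = K^{-1}\t_{\ao\<\ao}\1_{\ao} = K^{-1}\1_{\ao} = \mu\1_{\ao}$, where $\tr_{\bi} N_{\ao\bi} = \1_{\ao}$ is trace-preservation of $\mc{N}^{H}$. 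The dual objective is $\mu + K^{-2}\tr W_{\bi} = 1/K$, so weak duality gives the bound; combining with the lower bound yields the stated equality.

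There is essentially no obstacle beyond guessing the dual point. The one place where the Horodecki hypothesis actually enters is the identity $|\Omega_{\ao\bi}| = \t_{\ao\<\ao} N_{\ao\bi}$, i.e.\ that the partial transpose of the Choi matrix is genuinely positive so that $-\t_{\ao\<\ao} N_{\ao\bi}$ has that operator as its absolute value. All the rest is bookkeeping with the full transpose $N^{T}$ versus the partial transposes $\t_{\ao\<\ao}$ and $\t_{\bi\<\bi}$ appearing in (\ref{fidelity})–(\ref{dual3}), which cancel once one records that $\t_{\bi\<\bi} N^{T}_{\ao\bi} = \t_{\ao\<\ao} N_{\ao\bi}$.
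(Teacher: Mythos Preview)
Your proof is correct and follows essentially the same approach as the paper: the lower bound via a measure-and-reprepare forward-classical-assisted code, and the upper bound via the same dual feasible point $W_{\bi}=0$, $X_{\ao\bi}=0$, $\mu=1/K$, $\Omega_{\ao\bi}=-\t_{\ao\<\ao}N_{\ao\bi}$, with the Horodecki hypothesis entering exactly where you identify it. Your write-up is in fact slightly more detailed than the paper's in verifying the dual constraints and in offering the alternative primal point for the lower bound.
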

    \begin{proof}
        First, the channel fidelity $1/K$ is achieved trivially
        without even using the Horodecki channel, by choosing
        $\mc{E}_{\ao\Q\<\ai}$ in Figure \ref{codefig} to be a
        measurement in the computational basis, $\Q$ to carry the
        measurement outcome, and $\mc{F}_{\R\<\Q}$ to be a noiseless
        classical channel of dimension $K$, and $\mc{D}_{\bo\<\R\bi}$ 
        to be the identity operation.  

        Second, to see $1/K$ is also an upper bound for the
        channel fidelity, we exhibit a dual feasible solution whose
        value in the dual SDP is $1/K$:
        Since we do not have the Alice to Bob non-signalling constraint,
        we must set $W_{\bi} = 0$.
        For this $W_{\bi}$, constraint (\ref{dual1}) is  
        implied by (\ref{dual3}) if we choose 
        $\Omega_{\ao\bi} = -\t_{\ao\<\ao} N_{\ao\bi}$.   
        Furthermore, since $\t_{\ao\<\ao} N_{\ao\bi} \geq 0$ for a
        Horodecki channel, $|\Omega_{\ao\bi}| = \t_{\ao\<\ao} N_{\ao\bi}$
        and $\tr_{\bi}|\Omega_{\ao\bi}| = \1_{\ao}$.
        Then, choosing $X_{\ao\bi} = 0$ and $\mu = 1/K$
        implies (\ref{dual2}) and (\ref{dual3}).  Together, 
        the above gives a dual feasible point with value $\mu = 1/K$.  
    \end{proof}

        
        \section{Non-signalling codes}\label{NScodes}
        
        In this section, we compare the performance of
        entanglement-assisted codes and non-signalling codes.
        Furthermore, we show that the 
        entanglement-assisted classical
        capacity of any (memoryless) channel is equal to the
        non-signalling assisted classical capacity.

        First, recall from (\ref{UAinEAinNS}) that our SDP for
        non-signalling codes in Corollary \ref{primal-SDP} provides an
        upper bound on the fidelity of entanglement-assisted codes:
        \begin{equation}\label{NSBoundForEA}
            \mF{\EA}{\mc{N}}{K} \leq \mF{\NS}{\mc{N}}{K}.  
        \end{equation}

        \newcommand{\mPs}[3]{P_{s}^{#1}(#2,#3)}
        \def\Ps{P_{s}}
        Now, given free entanglement, there is a
        one-to-one correspondence between the performance for
        transmitting quantum and classical data. 

        The \emph{success probability} $\Ps(\mc{M}_{\bo\<\ai})$
        of an operation $\mc{M}$ is
        a measure of its ability to send classical data
        encoded in the computational basis:
        \[
            \Ps(\mc{M}_{\bo\<\ai}) =
            \frac{1}{K}\sum_{k=0}^{K-1}
            \tr_{\bo}
            \ketbra{k}{k}_{\bo}
            \mc{M}_{\bo\<\ai}\ketbra{k}{k}_{\ai}.
        \]
        \begin{definition}
            Let $\mPs{\Omega}{\mc{N}}{K}$
            denote the maximum success probability
            $\Ps(\mc{M}_{\bo\<\ai})$ of operations
            $\mc{M}_{\bo\<\ai} \in \ops(\ai\to\bo)$
            with $\dim \ai = \dim \bo = K$
            which can be obtained by applying
            a forward-assisted code in class $\Omega$ to $\mc{N}_{\bi\<\ao}$.
        \end{definition}

        The correspondence between the performance for transmitting
        quantum and classical data in the presence of free
        entanglement is due to superdense
        coding \cite{denseCoding} and teleportation \cite{teleportation}.  
        Using the superdense coding protocol,
        a $K$-dimensional quantum code of
        channel fidelity $f$ can be turned to a protocol for
        sending one out of $K^2$ equiprobable messages with success
        probability $f$.  The reverse holds by means of the teleportation 
        protocol. (See Appendix \ref{apdx} for details.)
        Therefore, for any subclass $\Omega$
        of forward-assisted codes that includes the entanglement-assisted
        codes (that is, $\EA \subseteq \Omega$) we have
        \begin{equation} \label{ea-correspondence}
            \mF{\Omega}{\mc{N}}{K} = \mPs{\Omega}{\mc{N}}{K^2}.
        \end{equation}
        For example, this equation holds for the class $\NS$
        of non-signalling codes.

        We can use the correspondence to obtain an upper bound on
        $\mF{\EA}{\mc{N}}{K}$ using the results in \cite{1210.4722}. 
        There, $\mPs{\EA}{\mc{N}}{K}$ is upper bounded by the solution 
        of a semidefinite program which we call $B(\mc{N},K)$.
        By (\ref{ea-correspondence}), $B(\mc{N},K)$ provides an SDP upper bound
        on $\mF{\EA}{\mc{N}}{K}$:
        \begin{equation}\label{SWBoundForEA}
            \mF{\EA}{\mc{N}}{K} = \mPs{\Omega}{\mc{N}}{K^2} \leq B(\mc{N},K^2).
        \end{equation}
        
        We now compare the bound (\ref{SWBoundForEA}) due to \cite{1210.4722}
        and our current bound (\ref{NSBoundForEA}) due to Theorem \ref{main-T}.
        The SDP for $B(\mc{N},K^2)$ is simply given by {\em relaxing}
        the constraint (\ref{NSABchoi}) in the SDP for $\mF{\NS}{\mc{N}}{K}$
        to an inequality.
Therefore, 
        \begin{equation}\label{relaxfact}
            \mF{\NS}{\mc{N}}{K} \leq B(\mc{N},K^2).  
        \end{equation}
        So the expression for $\mF{\NS}{\mc{N}}{K}$ given by
        Theorem \ref{main-T} gives
        an upper bound for entanglement-assisted codes at least as
        good as $B(\mc{N},K^2)$ (though we do not know if
        $\mF{\NS}{\mc{N}}{K}$ is strictly better than $B(\mc{N},K^2)$).
        Furthermore, $\mF{\NS}{\mc{N}}{K}$ is a stronger bound since it
        applies to the larger class of non-signalling codes.

        Regarding the asymptotic performance of non-signalling
        codes, it is clear that they yield a quantum capacity
        which is at least as large as the entanglement-assisted
        capacity. 

        We now argue that,
        for memoryless channels, the (asymptotic) capacities for 
        non-signalling codes and entanglement-assisted codes
        are, in fact, equal.  That is, 
        \begin{equation}
            Q^{\EA}(\mc{N}) = Q^{\NS}(\mc{N}).
        \end{equation}
        Clearly, non-signalling codes yield a quantum capacity no less
        than the entanglement-assisted capacity.
        To see the reverse, we start with a result in \cite{1210.4722} 
        showing that 
        an asymptotic analysis of $B(\mc{N}^{\ox n},K^{2n})$
        recovers the single-letter formula (\ref{BSST}),
        as an upper bound on $Q^{\EA}(\mc{N})$.
        From this result and the 
        inequality (\ref{relaxfact}), it follows that
        (\ref{BSST})
        is an upper bound even on $Q^{\NS}(\mc{N})$.
        Therefore the entanglement-assisted capacity of a memoryless
        quantum channel is \emph{equal} to the quantum capacity
        attained by non-signalling codes.
    
    \section{PPT preserving codes and distillation protocols}\label{PPTpCD}

The main result of this section, Prop.~\ref{BDSWarg}, 
relates PPT-preserving codes
and PPT-preserving entanglement distillation scheme studied in by
Rains in \cite{2001-Rains-SDP}. We will use it later
to obtain the values of  $Q^{\PPTp}$ and
$Q_{0}^{\PPTp}$ for the $d$-dimensional Werner-Holevo channel.

In \cite{2001-Rains-SDP}, Rains considers entanglement distillation by
PPT-preserving operations. He studies the quantity
\begin{equation}
    \begin{split}
    \PPTf(\rho_{\aoc\bi},K)
    :=&
    \max \{ \tr \phi_{\aic\bo}
    \distop \rho_{\aoc\bi} :\\
    &\distop \text{ is PPT-preserving},\\
    &\dim \tilde{A} = \dim \bo = K \}
    \end{split}
\end{equation}
which is the optimal entanglement fidelity
of $K \times K$ states that can be obtained from
$\rho_{\aic\bo}$ by PPT-preserving operations.
(We use these system labels to be consistent with those used later
in this section.) He also defines an associated asymptotic rate of
distillation
\begin{equation}
    \PPTD(\rho_{\aoc\bi}) :=
    \sup \{ r : \lim_{n \to \infty}
    \PPTf(\rho_{\aoc\bi}^{\ox n}, \lfloor 2^{nr} \rfloor) = 1 \}. 
\end{equation}
In the following,
we borrow ideas from
\cite{1996-BennettDiVincenzoSmolinWootters} relating error
correcting codes and entanglement distillation,
to relate PPT-preserving distillation of the Choi state of
$\mc{N}_{\bi\<\ao}$ to the channel fidelity of PPT-preserving
codes over $\mc{N}_{\bi\<\ao}$.

    \begin{proposition}\label{BDSWarg}
For any channel $\mc{N}_{\bi\<\ao}$, let 
\begin{equation}\label{N-choi}
    \nu_{\bi\aoc} := \mc{N}_{\bi\<\ao}\phi_{\ao\aoc}
    = \id_{\aoc\<\ao}N_{\bi\ao}/\dim(\ao)
\end{equation}
denote its Choi \emph{state}.

        (i) If a PPT-preserving operation
        can distills a $K \times K$ state
        from $\nu_{\bi\aoc}$
        with entanglement fidelity $f$,
        then there is a PPT-preserving code
        of size $K$ and channel fidelity $f$
        for $\mc{N}_{\bi\<\ao}$.
        Therefore, 
        $\mF{\PPTp}{\mc{N}}{K} \geq \PPTf(\nu_{\bi\aoc},K)$
        and
        $Q^{\PPTp}(\mc{N}) \geq \PPTD(\nu_{\bi\aoc})$.

        (ii) If $\mc{N}_{\bo\<\ai}$ can be implemented
        exactly using a single copy of its Choi state
        $\nu_{\bi\aoc}$
        and forward classical communication,
        then the converse to (i) is also true,
        and therefore
        $\mF{\PPTp}{\mc{N}}{K} = \PPTf(\nu_{\bi\aoc},K)$ and
        $Q^{\PPTp}(\mc{N}) = \PPTD(\nu_{\bi\aoc})$.
    \end{proposition}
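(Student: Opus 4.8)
The plan is to prove both parts by explicit circuit constructions that splice the channel (or its Choi state) together with teleportation and the given distillation or code, exploiting three facts established earlier: PPT-preserving operations are closed under composition, they contain all LOCC operations, and they are preserved by pre- or post-composition with local operations using shared randomness; and the fact that ``forward-assisted code'' is synonymous with ``bipartite operation non-signalling from Bob to Alice'' (via the realisation of \cite{2002-EggelingSchlingemannWerner}).

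\emph{Part (i).} Suppose $\distop$ is PPT-preserving with $\tr \phi_{\aic\bo}\distop\nu_{\bi\aoc} = f$. Post-composing $\distop$ with the twirl $\mc{T}$ --- which preserves $\tr\phi(\cdot)$ and keeps the operation PPT-preserving, being local operations with shared randomness --- we may assume $\distop\nu_{\bi\aoc}$ is isotropic with singlet fraction $f$. I would then define the bipartite operation $\mc{Z}_{\ao\bo\<\ai\bi}$ acting on Alice's input $\ai$ by: (1) Alice prepares $\phi_{\ao\aoc}$ and emits $\ao$ into $\mc{N}_{\bi\<\ao}$; (2) once the channel has delivered $\bi$ to Bob, so that $(\aoc,\bi)$ carries $\nu_{\bi\aoc}$, apply $\distop$; (3) teleport $\ai$ through the resulting near-maximally-entangled pair $(\aic,\bo)$, the Bell measurement on Alice's side and the Pauli correction on Bob's system $\bo$. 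Two facts finish this direction. First, $\mc{Z}_{\ao\bo\<\ai\bi}$ is a composition of PPT-preserving maps --- a local preparation, $\distop$, and teleportation (LOCC) --- hence PPT-preserving. Second, the only system Alice outputs externally is the channel input $\ao$, whose marginal is the maximally mixed state fixed in step~(1), independent of $\bi$; so $\mc{Z}_{\ao\bo\<\ai\bi}$ is non-signalling from Bob to Alice and is therefore a genuine forward-assisted code --- and, being PPT-preserving, a PPT-preserving code --- even though step~(2) may secretly use two-way communication. Its channel fidelity equals the channel fidelity of teleportation through an isotropic resource of singlet fraction $f$, which is $f$. Hence $\mF{\PPTp}{\mc{N}}{K} \geq \PPTf(\nu_{\bi\aoc},K)$; applying the construction to $\mc{N}^{\ox n}$, whose Choi state is $\nu_{\bi\aoc}^{\ox n}$, and letting $n\to\infty$ with $K = \lfloor 2^{rn}\rfloor$ gives $Q^{\PPTp}(\mc{N}) \geq \PPTD(\nu_{\bi\aoc})$.

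\emph{Part (ii).} Given a PPT-preserving code for $\mc{N}_{\bi\<\ao}$ of size $K$ and channel fidelity $f$, with bipartite operation $\mc{Z}_{\ao\bo\<\ai\bi}$, I would build a distillation from $\nu_{\bi\aoc}$ as follows. Alice holds her half $\aoc$ of $\nu_{\bi\aoc}$, Bob his half $\bi$; Alice prepares $\phi_{\aic\ai}$, feeds $\ai$ into $\mc{Z}_{\ao\bo\<\ai\bi}$, and --- in place of an actual use of $\mc{N}_{\bi\<\ao}$ --- simulates the channel on $\mc{Z}$'s output $\ao$ using the shared copy of $\nu_{\bi\aoc}$ together with forward classical communication, which is exactly what the hypothesis supplies; the simulated output is fed back into $\mc{Z}$'s $\bi$-port, and the kept systems are Alice's $\aic$ and Bob's $\bo$. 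By the definition of channel fidelity the pair $(\aic,\bo)$ has entanglement fidelity $f$ with $\phi_{\aic\bo}$. This procedure is a composition --- with a causal feedback loop linking $\mc{Z}$'s $\ao$-output to its $\bi$-input through the simulation --- of PPT-preserving pieces (local preparations, the PPT-preserving $\mc{Z}$, and the LOCC channel simulation), hence PPT-preserving; so it is an admissible distillation and yields $\mF{\PPTp}{\mc{N}}{K} \leq \PPTf(\nu_{\bi\aoc},K)$, which with part~(i) gives equality. Since the hypothesis passes to $\mc{N}^{\ox n}$ (simulate each tensor factor with its own copy of $\nu_{\bi\aoc}$), the finite-$n$ equalities $\mF{\PPTp}{\mc{N}^{\ox n}}{K} = \PPTf(\nu_{\bi\aoc}^{\ox n},K)$ give $Q^{\PPTp}(\mc{N}) = \PPTD(\nu_{\bi\aoc})$.

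The crux --- the part needing the above spelled out carefully --- is the legitimacy of treating these circuits, in which the channel (part~(i)) or its simulation (part~(ii)) sits in a feedback loop between Alice's output and Bob's input, as honest PPT-preserving bipartite operations that are non-signalling from Bob to Alice: in (i) one must check that a possibly two-way $\distop$ does not spoil Bob-to-Alice non-signalling of the assembled code (it does not, since Alice's sole external output is fixed before the loop closes), and in (ii) one must confirm that PPT-preservation survives closing the loop and that the channel simulation is itself LOCC. The remaining ingredients --- the twirling reduction, the standard identity that teleportation through an isotropic resource of singlet fraction $f$ has channel fidelity $f$, and the bookkeeping that $\mc{N}^{\ox n}$ has Choi state $\nu_{\bi\aoc}^{\ox n}$ --- are routine.
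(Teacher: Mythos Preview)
Your proposal is correct and follows essentially the same route as the paper: the same two constructions (prepare the Choi state, distil, then teleport for~(i); simulate the channel from the shared Choi state inside the given code for~(ii)), with PPT-preservation argued via closure under composition with LOCC. The one noteworthy difference is in~(i): the paper twirls $\distop$ specifically to make it non-signalling from Bob to Alice so that ESW can be applied to $\distop$ itself, thereby exhibiting the resulting code explicitly in the encoder--side-channel--decoder template of Figure~\ref{codefig}; you instead verify non-signalling from Bob to Alice for the assembled $\mc{Z}_{\ao\bo\<\ai\bi}$ directly (Alice's sole output $\ao$ is half of a locally prepared maximally entangled pair, hence its marginal is fixed independently of $\bi$) and then invoke the general equivalence between such bipartite operations and forward-assisted codes---a slightly more abstract but equally valid, and arguably cleaner, justification.
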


If the condition for (ii) holds, Rains' SDP for the PPT fidelity for
$\mc{N}_{\bo\<\ai}$ yields a special case of Theorem \ref{main-T}.

    \begin{proof}

    (i)
    Suppose that there is a PPT preserving distillation operation
    $\distop$ which takes the Choi state
    $\nu_{\bi\aoc}$
    to a state with entanglement fidelity $f$.
    As noted by Rains,
    this fidelity is unchanged if $\distop$
    is followed by the twirling operation
    $\mc{T}_{\aic\bo\<\aic\bo}$ with a definition similar 
    to that in (\ref{twirling-op}).
    So, the operation 
    $\mc{T}_{\aic\bo\<\aic\bo} \distop$,
    has the same fidelity for input
    $\nu_{\bi\aoc}$,
    and remains PPT preserving,
    but is also non-signalling in both directions.
    This is simply because the marginal
    state of each party's system after twirling
    is always a maximally mixed state, independent of the input.
    Altogether, without loss of generality, $\distop$
    can be chosen to be non-signalling in both directions. 
    
    \begin{figure}[h]
        \centering
        \begin{tikzpicture}[scale=1]
    \def\Lone{1.8}
    \def\Lfour{0.7}
    \def\Lthree{-0.75}
    \def\slope{0.3}
    \def\xa{0.1}

    \def\iny{0.25}
    \def\inxs{-4.4}

    \def\phiback{-3.5}

    \def\dleft{\phiback-0.5}

    \def\encx{-2.6}
    \def\ency{1}
    \def\encw{0.375}

    \def\measy{0.5}
    \def\measx{-1.3}
    \def\meash{0.4}
    \def\measw{0.3}

    \def\chanx{0.7}
    \def\chany{0.0}
    \def\chanr{0.375}

    \def\Ux{3.2}
    \def\Uy{-1.45}

    \def\decx{2.0}
    \def\decy{-1}
    \def\decw{0.375}
    \def\dech{0.5}

    \def\gapx{0.5}
    \def\gapy{0.2}
    \def\xb{-0.5}

    \def\og{0.1}
    \def\slo{0.22}

    \draw[draw=black,very thick]
    (\chanx,\chany) --
    (\chanx + \slope*\chany - \slope*\Lone ,\Lone) --
    (\phiback,\Lone) --
    (\phiback-0.2,0.5*\Lone+0.5*\ency) --
    (\phiback,\ency) -- (\encx,\ency);

    \node at (\phiback+0.2,\Lone+\slo)
    {$\ao$};

    \node at (\chanx + 0.1+ \slope*\chany - \slope*\Lone,\Lone+\slo)
    {$\ao$};

    \node at (\phiback+0.2,\ency+\slo)
    {$\aoc$};

    \draw[draw=red] (\encx-\encw-0.1,\ency-0.2) --
    (\encx-\encw-0.1,\Lone+0.2);
    \node at (\encx-\encw+0.3,\Lone+0.3)
    {$\phi_{\ao\aoc}$};

    \draw[draw=black,very thick]
    (\inxs,\iny) -- (\measx,\iny);
    \node at (\inxs+0.2,\iny+\slo)
    {$\ai$};

    \draw[draw=black,very thick]
    (\encx,0.5*\ency+0.5*\measy) -- (\measx,0.5*\ency+0.5*\measy);
    \node at (\encx+\encw+0.4,0.5*\ency+0.5*\measy+\slo)
    {$\aic$};

    \draw[draw=black,very thick]
    (\decx,\decy-\gapy) -- (\Ux,\decy-\gapy);
    \node at (0.4*\decx+0.6*\Ux,\decy-\gapy+\slo)
    {$\bo$};

    \draw[draw=black, style=double]
    (\measx,\measy) -- (\measx+\gapx,\measy) --
    (\measx+\gapx-\slope*\Uy+\slope*\gapy+\slope*\measy,\Uy-\gapy) --
    (\Ux,\Uy-\gapy);

    \draw[draw=black,very thick]
    (\Ux,\decy-\gapy) -- (\Ux+0.9,\decy-\gapy);
    \node at (\Ux+0.7,\decy-\gapy+\slo)
    {$\bo$};

    \draw[draw=black, very thick]
    (\chanx,\chany) --
    (\chanx+\slope*\chany-\slope*\decy-\slope*\gapy,\decy+\gapy) --
    (\decx,\decy+\gapy);

    \node at (\decx-0.7,\decy+\gapy+\slo)
    {$\bi$};

    \draw [fill=white] (\measx-\measw,\measy-\meash)
    rectangle (\measx+\measw,\measy+\meash);
    \node (sc) at (\measx,\measy) {$\mc{M}$};
    \node at (\measx+\measw+0.2,\measy+\slo)
    {$\C$};

    \draw [fill=white] (\Ux-\measw,\Uy-\meash)
    rectangle (\Ux+\measw,\Uy+\meash);
    \node (sc) at (\Ux,\Uy) {$\mc{U}$};

    \draw [fill=white] (\chanx-\chanr,\chany-\chanr)
    rectangle (\chanx+\chanr,\chany+\chanr);
    \node (chan) at (\chanx,\chany) {$\mc{N}$};

    \draw[draw=black,thick,dashed]
    (\encx,\ency+\gapy) -- (\xb,\ency+\gapy) --
    (\xb+\slope*\ency-\slope*\decy+2*\slope*\gapy,\decy-\gapy) --
    (\decx,\decy-\gapy);

    \draw [fill=gray] (\encx-\encw,1.5)
    rectangle (\encx+\encw,0.5);
    \node (enc) at (\encx,\ency) {};

    \draw [fill=gray] (\decx-\decw,\decy-\dech)
    rectangle (\decx+\decw,\decy+\dech);
    \node (dec) at (\decx,\decy) {};

    \def\ya{0.4}
    \def\gapb{0.45}
    \draw[draw=black,dotted,thick]
    (\Ux+\decw+\og,\Uy-\dech-\og) --
    (\measx+\measw+
    \slope*\measy-\slope*\meash
    -\slope*\Uy+\slope*\dech,\Uy-\dech-\og) --
    (\measx+\measw,\measy-\meash-\og) --
    (\dleft,\measy-\meash-\og) --
    (\dleft,\Lone+0.5+\og) --
    (\xb+\gapb,\Lone+0.5+\og) --
    (\xb+\gapb,\ya)--
    (\xb+\gapb+\slope*\ya-\slope*\decy,\decy) --
    (\decx-\decw-\og,\decy) --
    (\decx-\decw-\og,\decy+\dech+\og) --
    (\Ux+\decw+\og,\decy+\dech+\og) --
    (\Ux+\decw+\og,\Uy-\dech-\og);
\end{tikzpicture}
        \caption{Building a PPT-preserving code (the operations in the 
          dotted box) 
          based on a PPT-preserving distillation protocol 
          (the dark grey operations and the dashed line).}
        \label{fig:dist2code}
    \end{figure}
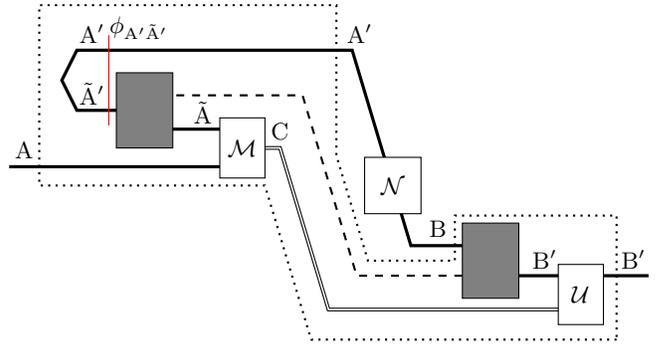

    We now construct a PPT-preserving code of dimension $K$ that is
    non-signalling from Bob to Alice using $\distop$.  
    Conceptually, the construction is the composition of three operations.
    First, Alice locally prepares the state $\phi_{\ao\aoc}$ and sends
    $\ao$ to Bob using $\mc{N}_{\bi|\ao}$ so they share the Choi state
    $\mc{N}_{\bi|\ao} \phi_{\ao\aoc}$.
    Second, they apply $\distop$ to distill a state
    $\psi_{\aic\bo}$ with channel fidelity $f$.
    Finally, Alice teleports a $K$-dimensional system from $\ai$ to
    $\bo$ using $\psi_{\aic\bo}$ instead of $\phi_{\aic\bo}$.  The 
    teleportation has channel fidelity $f$.

    These three steps are shown in Fig.\ \ref{fig:dist2code}.  
    Since $\distop$ is non-signalling from
    Bob to Alice, it can be implemented by local operations (the grey
    boxes in Fig.\ \ref{fig:dist2code}) and quantum communication from
    Alice to Bob (represented by the dashed line in
    Fig.\ \ref{fig:dist2code}). This is significant, because
    it means that Alice can complete all of her local operations
    before Bob starts his. The teleportation procedure consists
    of Alice's local measurement $\mc{M}_{\C\<\ai\aic}$, forward
    classical communication of system $\C$, and Bob's locally controlled
    unitary $\mc{U}_{\bo\<\bo\C}$.
    
    The PPT-preserving code is derived from Fig.\ \ref{fig:dist2code}
    with the encoder (decoder) being all of Alice's (Bob's) local
    operations combined, and the forward side channel being the
    communication of $\C$ combined with the forward channel in
    $\distop$.  
    Used with $\mc{N}_{\bi\<\ao}$, the code effects
    the same transmission from $\ai$ to $\bo$ as in the conceptual
    composition described earlier.
    The forward-assisted code has size $K$ and bipartite operation
    $\mc{Z}_{\ao\bo\<\ai\bi} = \mc{U}_{\bo\<\bo\C}\mc{M}_{\C\<\ai\aic}
    \distop \phi_{\ao\aoc}$.
    Since $\mc{Z}_{\ao\bo\<\ai\bi}$ is the composition of the
    PPT-preserving $\distop$ and the (one-way) LOCC operation
    $\mc{U}_{\bo\<\bo\C}\mc{M}_{\C\<\ai\aic}$, the code is 
    PPT-preserving.

    \begin{figure}[h]
        \centering
        \begin{tikzpicture}[scale=1]
\def\Lone{1.8}
\def\Lfour{0.7}
\def\Lthree{-0.75}
\def\lowLine{-1}
\def\slope{0.3}
\def\xa{0.2}

\def\chanx{-4.8}
\def\chany{\lowLine}
\def\chanr{0.375}

\def\iny{0.25}
\def\inxs{-5.7}

\def\encx{-2.6}
\def\ency{1}
\def\encw{0.375}

\def\measy{0.5}
\def\measx{-1.3}
\def\meash{0.4}
\def\measw{0.3}

\def\Ux{0.1}
\def\Uy{\lowLine}

\def\decx{1.4}
\def\decy{\lowLine+0.2}
\def\decw{0.375}
\def\dech{0.5}

\def\gapx{0.5}
\def\gapy{0.2}
\def\xb{-0.2}
\def\og{0.1}
\def\dleft{-4.3}
\def\slo{0.22}
\def\phiback{-3.5}
\def\rightx{2.4}

\draw[draw=black,very thick]
(\rightx,\Lone) --
(\phiback,\Lone) --
(\phiback-0.2,0.5*\Lone+0.5*\ency) --
(\phiback,\ency) -- (\encx,\ency);

\node at (\rightx-0.3,\Lone+\slo)
{$\aic$};

\node at (\phiback+0.2,\Lone+\slo)
{$\aic$};

\node at (\phiback+0.2,\ency+\slo)
{$\ai$};

\draw[draw=black,very thick]
(\rightx,\lowLine) --
(\inxs,\lowLine) --
(\inxs - 0.2, 0.5*\lowLine+0.5*\iny) --
(\inxs,\iny) -- (\measx,\iny);

\node at (\inxs+0.2,\iny+\slo)
{$\aoc$};

\node at (\chanx+\chanr+0.1,\iny+\slo)
{$\aoc$};

\node at (\inxs+0.2,\lowLine+\slo)
{$\aocc$};

\node at (\chanx+\chanr+0.2,\lowLine+\slo)
{$\bi$};

\node at (\rightx-0.2,\lowLine+\slo)
{$\bo$};

\draw[draw=red] (\encx-\encw-0.1,\ency-0.2) --
(\encx-\encw-0.1,\Lone+0.2);
\node at (\encx-\encw+0.2,\Lone+0.3)
{$\phi_{\aic\ai}$};

\draw[draw=black,very thick]
(\encx,0.5*\ency+0.5*\measy) -- (\measx,0.5*\ency+0.5*\measy);
\node at (\encx+\encw+0.4,0.5*\ency+0.5*\measy+\slo)
{$\ao$};

\draw[draw=black, style=double]
(\measx,\measy) -- (\measx+\gapx,\measy) --
(\measx+\gapx-\slope*\Uy-\slope*\gapy+\slope*\measy,\Uy+\gapy) --
(\Ux,\Uy+\gapy);

\node at (\decx-0.7,\lowLine+\slo)
{$\bi$};

\draw [fill=white] (\measx-\measw,\measy-\meash)
rectangle (\measx+\measw,\measy+\meash);
\node (sc) at (\measx,\measy) {$\mc{M}$};
\node at (\measx+\measw+0.2,\measy+\slo)
{$\C$};

\draw [fill=white] (\Ux-\measw,\Uy-\meash)
rectangle (\Ux+\measw,\Uy+\meash);
\node (sc) at (\Ux,\Uy) {$\mc{V}$};

\draw [fill=white] (\chanx-\chanr,\chany-\chanr)
rectangle (\chanx+\chanr,\chany+\chanr);
\node (chan) at (\chanx,\chany) {$\mc{N}$};

\draw[draw=black,thick,dashed]
(\encx,\ency+\gapy) -- (\xb,\ency+\gapy) --
(\xb+\slope*\ency-\slope*\decy+0*\slope*\gapy,\decy+\gapy) --
(\decx,\decy+\gapy);

\draw [fill=gray] (\encx-\encw,1.5)
rectangle (\encx+\encw,0.5);
\node (enc) at (\encx,\ency) {};

\draw [fill=gray] (\decx-\decw,\decy-\dech)
rectangle (\decx+\decw,\decy+\dech);
\node (dec) at (\decx,\decy) {};

\draw [draw=black,thick,dotted]
(\phiback-0.2-\og,\Lone+\og+0.5) rectangle
(\decx+\decw+\og,\chany-\chanr-\og);

\draw[draw=red]
(\chanx-\chanr-0.1,\lowLine-0.3) --
(\chanx-\chanr-0.1,\measy+0.4);
\node at (\chanx-\chanr+0.3,\measy+0.5)
{$\phi_{\aocc\aoc}$};

\draw[draw=red]
(\chanx+\chanr+0.4,\lowLine-0.7) --
(\chanx+\chanr+0.4,\measy);
\node at (\chanx+\chanr+0.8,\lowLine-0.8)
{$\nu_{\bi\aoc}$};

\end{tikzpicture}
        \caption{Building a PPT-preserving distillation operation
        from a PPT-preserving code.}\label{fig:code2dist}
    \end{figure}
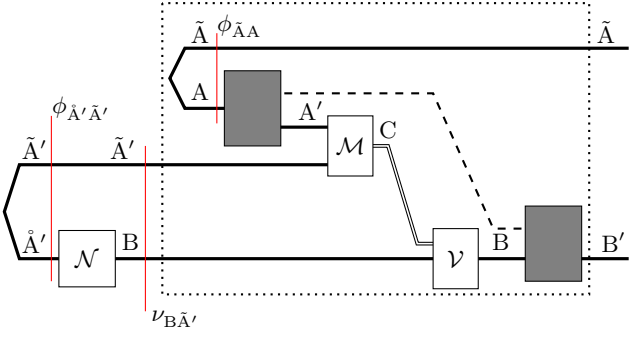

    For part (ii), suppose the channel $\mc{N}_{\bi\<\ao}$ can be
    simulated exactly using a shared copy of its Choi state and
    forward classical communication.  
    Referring to Figure \ref{fig:code2dist}, this means that
    $\mc{V}_{\bi\<\bi\C}\mc{M}_{\C\<\aoc\ao}
    \nu_{\bi\aoc}
    = \mc{N}_{\bi\<\ao}$.
    Let $\mc{Z}_{\ao\bo\<\ai\bi}$ be the bipartite operation
    corresponding to a forward-assisted code which,
    transmits a $K$-dimensional state over $\mc{N}_{\bi\<\ao}$
    with channel fidelity $f$.
    If one composes the channel simulation with $\mc{Z}_{\ao\bo\<\ai\bi}$
    as in figure \ref{fig:code2dist}, the operations in the dashed
    box distills the Choi state with fidelity $f$.
    Furthermore if $\mc{Z}_{\ao\bo\<\ai\bi}$ is PPT-preserving and 
    non-signalling from Bob to Alice, so is the distillation 
    operation.  
    \end{proof}

    It would be useful to have
    complete characterisation of channels
    that can be implemented exactly using a single copy of their Choi
    states and forward classical communication.
    We are unaware of such a characterisation in the literature.
    Here, we give a sufficient condition for a channel to 
    have this property.
    Let $\aocc$ be a copy of system $\ao$,
    and let us write
    \begin{equation}
        \nu_{\bi\aoc} =
        \mc{N}_{\bi\<\aocc} \phi_{\aocc\aoc}
    \end{equation}
    (as shown in
    Figure~\ref{fig:code2dist})
    where $\mc{N}_{\bi\<\aocc} := \mc{N}_{\bi\<\ao}
    \id_{\ao\<\aocc}$.
    Suppose that we choose the measurement operation
    $\mc{M}_{\C\<\ao\aoc}$
    and a controlled unitary operation
    $\mc{U}_{\aocc\<\aocc\C}$
    so that they comprise a teleportation protocol, such that
    \begin{equation}\label{teleports}
        \mc{U}_{\aocc\<\aocc\C}
    \mc{M}_{\C\<\ao\aoc}
    \phi_{\aocc\aoc} = \id_{\aocc\<\ao}.
    \end{equation}
    Here, $\mc{U}_{\aocc\<\aocc\C}$
    measures system $\C$ in the computational
    basis, obtaining an outcome $i$, and then applies
    a unitary transformation
    $\mc{U}^{(i)}_{\aocc\<\aocc}$
    to system $\aocc$.
    
    Now, suppose that there are unitary operations
    $\mc{V}^{(i)}_{\bi\<\bi}$ for each $i$
    such that
    \begin{equation}\label{covar}
        \forall i:~
        \mc{N}_{\bi\<\aocc}
    \mc{U}^{(i)}_{\aocc\<\aocc}
    = \mc{V}^{(i)}_{\bi\<\bi}\mc{N}_{\bi\<\aocc}.
    \end{equation}
    Let $\mc{V}_{\bi\<\bi\C}$ be a controlled unitary
    which measures $\C$ in the computational basis,
    and applies $\mc{V}^{(i)}_{\bi\<\bi}$ on obtaining
    outcome $i$.
    Then, using (\ref{covar}) and (\ref{teleports}),
    \begin{equation}
        \begin{split}
            &\mc{V}_{\bi\<\bi\C}
    \mc{M}_{\C\<\ao\aoc}
    \nu_{\bi\aoc}\\
        =&\mc{V}_{\bi\<\bi\C}
    \mc{M}_{\C\<\ao\aoc}
    \mc{N}_{\bi\<\aocc}
    \phi_{\aocc\aoc}\\
    =&
    \mc{N}_{\bi\<\aocc}
    \mc{U}_{\aocc\<\aocc\C}
    \mc{M}_{\C\<\ao\aoc}
    \phi_{\aocc\aoc}\\
    =&
    \mc{N}_{\bi\<\ao}.
        \end{split}
    \end{equation}
    That is, a use of $\mc{N}_{\bi\<\ao}$ can be implemented
    by a single copy of its Choi state $\nu_{\bi\aoc}$,
    local operations and forward classical communication.

    \section{Coding over generalised Werner-Holevo channels}\label{WH}

    In this section, we apply the SDPs developed in section \ref{SDPs}
    to investigate the performance of codes which are non-signalling,
    PPT-preserving, or both, over the generalised Werner-Holevo       
    channels~\cite{2002-WernerHolevo}.
    For each dimension $d \geq 2$, consider the one-parameter family 
    of channels 
    \begin{equation}\label{GWH}
        \mc{W}^{(d,\alpha)}_{\bi\<\ao} := 
        (1-\alpha) \mc{W}^{(d,0)}_{\bi\<\ao} +
        \alpha \mc{W}^{(d,1)}_{\bi\<\ao},
    \end{equation}
    where
    \begin{align}
        \mc{W}^{(d,1)}_{\bi\<\ao}:& X_{\ao} \mapsto
        \frac{1}{d-1} (\1_{\bi} \tr X - \id_{\bi|\ao}X_{\ao}^{\T}), \text{ and}\\
        \mc{W}^{(d,0)}_{\bi\<\ao}:& X_{\ao} \mapsto
        \frac{1}{d+1} (\1_{\bi} \tr X + \id_{\bi|\ao}X_{\ao}^{\T}).
    \end{align}
    $\mc{W}^{(d,1)}_{\bi\<\ao}$ is often called
    the d-dimensional \emph{Werner-Holevo channel}.  
    Recall that $\sym_{\bi\ao}$ and $\asym_{\bi\ao}$ denote the
    projectors onto the symmetric and the antisymmetric subspaces of
    $\mc{H}_{\bi}\ox\mc{H}_{\ao}$ respectively.
    The Choi matrices of $\mc{W}^{(d,1)}_{\bi\<\ao}$ and 
    $\mc{W}^{(d,0)}_{\bi\<\ao}$ are proportional to $\asym_{\bi\ao}$ 
    and $\sym_{\bi\ao}$ respectively.

    The three-dimensional Werner-Holevo channel
    $\mc{W}^{(3,1)}_{\bi\<\ao}$
    has a Stinespring representation
\begin{align}
    \mc{W}^{(3,1)}_{\bi\<\ao}:& X_{\ao}
    \mapsto \tr_{\bs{E}} V X_{\ao} V^{\dagger},
\nonumber 
\\
    V :=& 2^{-1/2} \sum_{i,j,k=1}^{3}
\varepsilon_{ijk} \ket{j}_{\bi}\ket{k}_{\bs{E}} \bra{i}_{\ao}
\label{3dwh}
\end{align}
where $\varepsilon_{ijk}$ is the three-dimensional Levi-Civita symbol,
which is $1$ when $ijk$ is an even permutation of $123$, $-1$ when
$ijk$ is an odd permutation of $123$ and $0$ otherwise.
From (\ref{3dwh}), we see that $\mc{W}^{(3,1)}_{\bi\<\ao}$
is \emph{symmetric}, meaning that
\begin{equation}
    \tr_{\bs{E}} V X_{\ao} V^{\dagger}
    = \id_{\bs{B}\<\bs{E}} \tr_{\bs{B}} V X_{\ao} V^{\dagger}.
\end{equation}
Therefore $\mc{W}^{(3,1)}_{\bi\<\ao}$ is anti-degradable 
and hence has no unassisted quantum capacity; i.e. $Q^{\UA}(\mc{W}^{(3,1)}_{\bi\<\ao}) = 0$.

The quantum Lov\'{a}sz bound of Duan,
Severini and Winter \cite{2013-DuanSeveriniWinter}
is easily applied to this channel to
establish that it has no zero-error classical capacity, even with
arbitrary entanglement assistance.

By its definition, the generalised Werner-Holevo channels
have the covariance property that, for all unitary operations
$\mc{U}_{\ao\<\ao}: X_{\ao} \mapsto U_{\ao} X_{\ao} U^{\dagger}_{\ao}$
(where $U_{\ao}$ is a unitary operator on $\Hs_{\ao}$),
we have
\begin{equation}
    \mc{W}^{(d,\alpha)}_{\bi\<\ao} \mc{U}_{\ao\<\ao}
    =
    \mc{V}_{\bi\<\bi} \mc{W}^{(d,\alpha)}_{\bi\<\ao}
\end{equation}
where $\mc{V}_{\bi\<\bi}:
X_{\bi} \mapsto U^{\T}_{\bi} X_{\bi} \bar{U}_{\bi}$
and $U_{\bi} := \id_{\bi\<\ao} U_{\ao}$.
By the argument at the end of section \ref{PPTpCD}, 
$n$ uses of $\mc{W}^{(d,\alpha)}_{\bi\<\ao}$ can be exactly simulated using $n$
copies of the corresponding Choi state and forward classical
communication by teleportation.
    Therefore, by Proposition \ref{BDSWarg},
    the performance of PPT-preserving codes
    over these channels corresponds exactly to the
    performance of PPT-preserving distillation protocols
    on the corresponding Choi states studied 
    by Rains \cite{2001-Rains-SDP}.

    Corollary 5.6 of Rains \cite{2001-Rains-SDP}
    shows that PPT-preserving 
    operations can distill entanglement from multiple
    copies of $\mc{W}^{(d,1)}_{\bi\<\ao}$ at an optimal rate of 
    $\log ((d+2)/d)$
    ebits per state, asymptotically.  Furthermore, this rate 
    is achieved for \emph{exact} distillation. 
    Thus the quantum capacity and the
    \emph{zero-error} quantum capacity of PPT-preserving codes over
    $\mc{W}^{(d,1)}_{\bi\<\ao}$ are both $\log (d+2)/d$.
    \begin{equation}
        Q^{\PPTp}(\mc{W}^{(d,1)})
        = Q_{0}^{\PPTp}(\mc{W}^{(d,1)})
        = \log \frac{d+2}{d}.
        \label{rains-wh-zero-error}
    \end{equation}
    
    On the other hand, using the result of
    Section \ref{NScodes}
    and Eq.~(\ref{BSST}) one finds
    \begin{equation}
        Q^{\NS}(\mc{W}^{(d,1)})
    = Q^{\EA}(\mc{W}^{(d,1)})
    = \frac{1}{2}\log\frac{2d}{d-1}.
    \end{equation}
    
    \subsection{\bf Performance of non-signalling, PPT-preserving codes for
    $10 \leq n \leq 120$ with fixed rates.}
    The generalised Werner-Holevo channel has high degree of symmetry.
    We exploit this symmetry to reduce the semidefinite programs
    described in Theorem \ref{main-T} and Corollary \ref{primal-SDP}
    to linear programs in $n+1$ (real) variables, for $n$ uses of the
    generalised Werner-Holevo channel.  (See Appendix \ref{LPs}.)
    The resulting linear programs can be stated using rational numbers,
    and we have evaluated their solutions \emph{exactly} using
    Mathematica's `LinearProgramming' function.
   
    In Figure \ref{code-performance-two},
    we plot 
    the log of the fidelity
    $F^{\PPTp\cap\NS}((\mc{W}^{(3,1)})^{\ox n}, \lfloor 2^{nr} \rfloor)$
    as a function of blocklength $n$ for the two rates
    $r \in \{\log (5/2 - 1/40), \log (5/2 - 1/20)\}$. 
    While the fidelity eventually goes to one
    at rate $\log (5/2 - 1/20)$,
    it appears to exhibit an exponential
    decay at rate $\log (5/2 - 1/40)$.
    
    \begin{figure}[ht]
        \centering
        \includegraphics[
        scale=1.2]{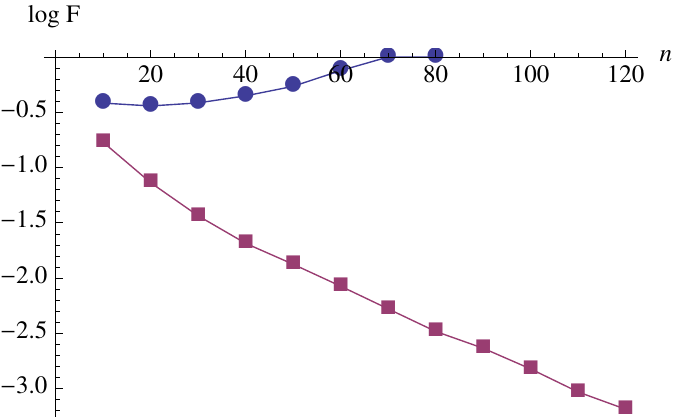}
        \caption{The logarithm (base-two)
        of the optimal channel fidelity for 
        non-signalling, PPT-preserving codes of size
        $K_n = \lfloor 2^{r n} \rfloor$
        for $n$ uses of the three-dimensional Werner-Holevo
        channel at rates
        $r = \log (5/2 - 1/20)$ (circles) and
        $r = \log (5/2 - 1/40)$ (squares).}
        \label{code-performance-two}
   \end{figure}

    From Eq.~(\ref{rains-wh-zero-error}), at either rate studied
    above, the fidelity for PPT-preserving codes is $1$.  Thus our
    non-signalling and PPT-preserving codes provides strictly tighter
    bound for the unassisted performance for finite block-length.

    If the code fidelity
    $F^{\PPTp\cap\NS}((\mc{W}^{(3,1)})^{\ox n}, \lfloor 2^{nr} \rfloor)$
    at rate $\log (5/2 - 1/40)$ does not eventually 
    increase and approach $1$ as $n$ increase,
    then $Q^{\PPTp\cap\NS}(\mc{W}^{(3,1)})$
    is no more than $\log (5/2 - 1/40)$,
    which is strictly less than both
    $Q^{\PPTp}(\mc{W}^{(3,1)}) = \log (5/2)$
    and $Q^{\NS}(\mc{W}^{(3,1)}) = \frac{1}{2}\log 3$.
    
    Deciding whether there really can be
    a separation between the asymptotic capacities
    $Q^{\PPTp}$ and $Q^{\PPTp\cap\NS}$
    presents an interesting open problem.

    \subsection{\bf Performance of non-signalling, PPT-preserving codes for
    $n=2$ with variable rate.}

    In Figure \ref{code-performance-one} we plot the channel code
    fidelities when the channel operation is two uses of the three
    dimensional Werner-Holevo channel.  We consider codes that 
    are non-signalling, PPT-preserving, or both.

\begin{figure}[ht]
\centering
        \includegraphics[bb = 0 1 198 131,scale=1.0]{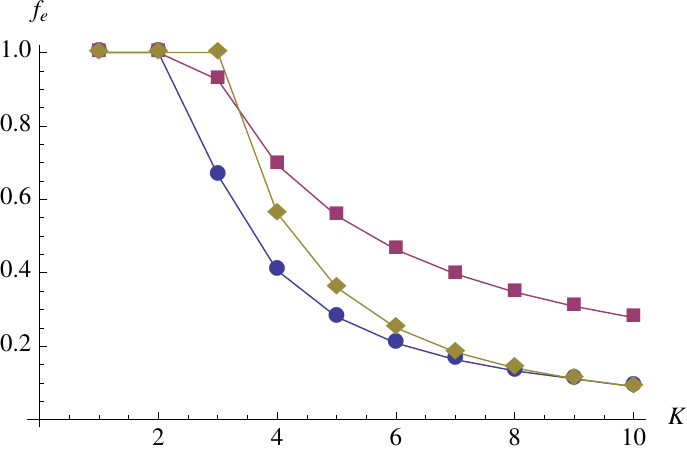}
        \caption{The optimal channel fidelity for sending
        the state of a $K$-dimensional system over two-uses
        of the three-dimensional Werner-Holevo channel using
        a code which is (i) non-signalling (yellow diamonds),
        (ii) PPT-preserving (red squares)
        (iii) both non-signalling and PPT-preserving (blue circles).}
        \label{code-performance-one}
    \end{figure}

    First, we note that $\mF{\PPTp}{\mc{N}}{K}$ and
    $\mF{\NS}{\mc{N}}{K}$ are incomparable.  In particular,
    non-signalling codes can transmit a $3$-dimensional system 
    with fidelity one, but has fidelity lower than that of PPT-preserving 
    codes for $K \geq 4$.  Also, $\mF{\PPTp \cap \NS}{\mc{N}}{K}$
    is strictly less than
    $\min \{\mF{\PPTp}{\mc{N}}{K}, \mF{\NS}{\mc{N}}{K}\}$,
    for small $n$ and therefore provides a better bound on 
    $\mF{\UA}{\mc{N}}{K}$.
    Finally, for $K\geq 9$, non-signalling codes can be chosen to 
    be also PPT preserving without affecting the fidelity.

    \subsection{\bf A bonus observation -- superactivation?}

    Consider the specific data point $K=2$ in Figure \ref{code-performance-one}. 
    All three curves coincide at this point and have value $1$.  
    Thus zero-error quantum communication of a qubit 
    is possible over the channel even if we demand
    that the code be both \emph{non-signalling} and PPT-preserving.  
    While Rain's work \cite{2001-Rains-SDP} already implies the 
    possibility given PPT-preserving codes, 
    it is somewhat surprising that one can further restrict to 
    non-signalling codes.  
    \begin{example}\label{example}
        There is a PPT-preserving, non-signalling code
        which can transmit a qubit with perfect entanglement
        fidelity over two uses of the three dimensional Werner-Holevo
        channel: The channel input system is
        $\ao = \ao_{1}\ao_{2}$
        and the channel output system is
        $\bi = \bi_{1}\bi_{2}$, and the channel operation is
        $\mc{W}^{(3,1)}_{\bi_{1}\<\ao_{1}}
        \mc{W}^{(3,1)}_{\bi_{2}\<\ao_{2}}$.
        The code is given by taking the
        maximally mixed average channel input
        $\rho_{\ao} = \1_{\ao}/\dim(\ao)$
        and choosing
        $\sdpop_{\ao\bi}
        = \frac{3}{32} \sym_{\ao_{1}\bi_{1}} \sym_{\ao_{2}\bi_{2}}
        +\frac{7}{32} (\sym_{\ao_{1}\bi_{1}} \asym_{\ao_{2}\bi_{2}}
        + \asym_{\ao_{1}\bi_{1}} \sym_{\ao_{2}\bi_{2}})
         + \asym_{\ao_{1}\bi_{1}} \asym_{\ao_{2}\bi_{2}}$
        in the expressions (\ref{NSBAtwirled})
        and (\ref{U-invar-form}).
    \end{example}
        
As discussed in Section \ref{code-classes},
PPT-preserving codes include codes assisted by
arbitrary forward communication over Horodecki channels.
Therefore, the results of Smith and Yard on superactivation
\cite{2008-SmithYard} mean that such codes may yield quantum capacity
over symmetric channels. Nevertheless, we were somewhat surprised
to find that a PPT-preserving non-signalling code allows the perfect
transmission of a single qubit over two uses of a simple
example of a symmetric channel.

Since the code operation is non-signalling from Bob to Alice,
it can be implemented by forward quantum communication from
Alice to Bob. This is the result of Eggeling and Schlingemann
and Werner \cite{2002-EggelingSchlingemannWerner},
that ``semicausal operations are semilocalisable''.
The use of this forward quantum communication is somehow ``hidden''
by the local operations performed by Alice and Bob in the implementation
so that the resulting bipartite operation is both PPT and non-signalling.

Given the result of Eggeling et al.,
it might be tempting to guess that a bipartite operation which
is non-signalling from Bob to Alice \emph{and} PPT-preserving, like the forward-assisted code in Example $\ref{example}$
(which is also non-signalling from Alice to Bob),
can always be implemented by forward communication \emph{over a
Horodecki channel}.
If this were possible for our Example \ref{example},
or for some other PPT-preserving code enabling
zero-error quantum communication over a channel without quantum capacity
then it would constitute a remarkably extreme version of the superactivation
phenomenon discovered by Smith and Yard \cite{2008-SmithYard}.
We leave this question open here. However we can give an example
which shows that this kind of implementation is not always possible,
even when the bipartite operation is non-signalling in both directions.

The example is a bipartite operation $\mc{Z}_{\ao\bo\<\ai\bi}$ with
$\dim \ai = \dim \ao = \dim \bi = \dim \bo = 2$, which we
will describe by giving a particular protocol to implement the operation,
which is illustrated in the top half of Figure \ref{NSPPTneqFHA}.
Bob measures his input $\bi$ in the computational basis and sends
the outcome $b$ to Alice. He also generates an unbiased random bit $r$,
which he sends to Alice and outputs on $\bo$ in the computational basis.
If $b=0$ Alice does nothing, but if $b=1$ she applies a Hadamard gate
to $\ai$. Then, regardless of the value of $b$,
she measures $\ai$ in the computational basis yielding outcome $a$.
She outputs $a \oplus r$ on $\ao$ in the computational basis.

Since the operation can be implemented using only
classical communication from Bob to Alice, it is certainly
a PPT-preserving measurement.
The marginal states of $\ao$ and $\bo$ are both maximally mixed states,
independent of the input state, so the operation is non-signalling in
both directions.

\newcommand*{\gcrad}{0.15}
\def\aline{1}
\def\bline{-1}
\def\leftx{-0.3}
\def\rightx{6}

\def\bmx{1}
\def\ahx{2.5}
\def\amx{3.5}
\def\brx{3.5}
\def\cnotx{4.4}

\def\gateR{0.4}
\def\lo{0.2}
\def\og{0.2}
\newcommand{\gate}[3]{\draw [fill=white] (#1-\gateR,#2-\gateR)
            rectangle (#1+\gateR,#2+\gateR); \node at (#1,#2) {#3};}
\newcommand{\xor}[2]{\draw (#1,#2) circle (\gcrad);
            \draw[style=very thick] (#1-\gcrad,#2) -- (#1+\gcrad,#2);
            \draw[style=very thick] (#1,#2-\gcrad) -- (#1,#2+\gcrad);}
\newcommand{\ctl}[2]{\draw[fill=black] (#1,#2) circle (\gcrad);}
\begin{figure}
    \centering
    \begin{tikzpicture}[scale=1.0] 
        \draw[draw=black,very thick]
        (\leftx,\aline) -- (\rightx,\aline);

        \draw[draw=black,very thick]
        (\leftx,\bline) -- (\bmx,\bline);

        \draw[draw=black,style=double]
        (\bmx,\bline+\lo) --
        (\bmx+\gateR+\lo,\bline+\lo) --
        (\ahx-\gateR-\lo,\aline-\lo) -- (\ahx,\aline-\lo);

        \draw[draw=black,style=very thick]
        (\brx,\bline) -- (\rightx,\bline);

        \draw[draw=black,style=double]
        (\cnotx,\aline) -- (\cnotx,\bline);
        \xor{\cnotx}{\aline};
        \ctl{\cnotx}{\bline};

        \gate{\bmx}{\bline}{$\mc{M}$}
        \gate{\ahx}{\aline}{$\mc{H}$}
        \gate{\amx}{\aline}{$\mc{M}$}
        \gate{\brx}{\bline}{$\mc{P}$}

        \draw[draw=black,dashed]
        (\bmx-\gateR-\og,\aline+\gateR+\og)
        rectangle (\cnotx+\gateR+\og,\bline-\gateR-\og);

        \node[text=black] (Z) at (0.7,0) {$\Zc$};

        \node at (\leftx+0.3,\aline+\slo) {$\ai$};
        \node at (\leftx+0.3,\bline+\slo) {$\bi$};
        \node at (\rightx-0.3,\aline+\slo) {$\ao$};
        \node at (\rightx-0.3,\bline+\slo) {$\bo$};
    \end{tikzpicture}
   \hspace{8ex}
   \def\ahx{4}
   \def\encx{1.6}
   \def\rightx{7}
   \begin{tikzpicture}[scale=1.0] 
        \def\cnotwo{6}
        \draw[draw=black,very thick]
        (\leftx,\aline) -- (\cnotx+1.0,\aline) --
        (\cnotwo,\bline+0.4) -- (\cnotwo,\bline);
        
        \gate{0.5*\cnotx+0.5+0.5*\cnotwo}{0}{$\mc{C}$}

        \draw[draw=black,very thick]
        (\leftx,\bline) -- (\rightx,\bline);

        \draw[draw=black, very thick]
        (\encx,\aline-\lo) --
        (\encx+\gateR+\lo,\aline-\lo) --
        (\ahx-\gateR-\lo,\bline+\lo) --
        (\ahx,\bline+\lo);

        \xor{\cnotwo}{\bline};

        \gate{\encx}{\aline}{$\mc{E}$}
        \gate{\ahx}{\bline}{$\mc{D}$}
        \def\auxx{0.5*\encx+0.5*\ahx}
        \def\auxy{0.5*\aline+0.5*\bline}
        \def\auxw{0.5}
        \gate{\auxx}{\auxy}{$\mc{F}$}

        \draw[draw=black,dashed]
        (\bmx-\gateR-\og,\aline+\gateR+\og)
        rectangle (\cnotx+\gateR+\og,\bline-\gateR-\og);
        
        \draw[draw=black,dotted]
        (\encx-\gateR-0.1,\aline+\gateR+0.1)
        rectangle (0.5*\cnotx+0.5+0.5*\cnotwo+\gateR+0.1,
        -\gateR-0.1);
        
        \node[text=black] (G) at
        (0.5*\cnotx+0.5+0.5*\cnotwo+\gateR-0.3,\aline) {$\mc{G}$};

        \node[text=black] (Z) at (0.7,0) {$\Zc$};

        \node at (\leftx+0.3,\aline+\slo) {$\ai$};
        \node at (\leftx+0.3,\bline+\slo) {$\bi$};
        \node at (\cnotx+0.9,\aline+\slo) {$\ao$};
        \node at (\cnotx+0.9,\bline+\slo) {$\bo$};
        
        \node at (\auxx-0.3,\auxy+\auxw+\slo) {$\Q$};
        \node at (\auxx+0.3,\auxy-\auxw-\slo) {$\R$};
        
        \node at (\cnotx+1.8,\bline+\slo+\slo) {$\C$};
    \end{tikzpicture}
    \caption{}\label{NSPPTneqFHA}
\end{figure}

However, in the implementation just described
the communication was in the ``backward'' direction - from Bob to Alice.
We claim that implementing the operation by \emph{forward}
communication only, requires at least one qubit of
\emph{zero-error} quantum communication, which clearly
cannot be accomplished by any Horodecki channel.
Here is a proof:
The most general implementation with only forward communication
has the form
\[
    \mc{Z}_{\ao\bo\<\ai\bi} = \mc{D}_{\bo\<\bi\R}
    \mc{F}_{\R\<\Q}\mc{E}_{\ao\Q\<\ai}
\]
where $\mc{E}_{\ao\Q\<\ai}$ is Alice's local operation,
$\mc{F}_{\R\<\Q}$ is the channel used for forward communication, $\mc{D}_{\bo\<\bi\R}$ is Bob's local operation. We illustrate
this in the bottom half of Figure \ref{NSPPTneqFHA}.
Now, if Alice sends her bit $a \oplus r$ to Bob with one use of
a forward completely dephasing channel $\mc{C}_{\C\<\ao}$, then Bob can XOR $a \oplus r$ with $r$ to
obtain the outcome of Alice's measurement of the $\ai$ system.
Therefore, by measurement of the output of the operation
$\mathcal{G}_{\C\R\<\ai}
:= \mc{C}_{\C\<\ao} \mc{F}_{\R\<\Q}\mc{E}_{\ao\Q\<\ai}$
(outlined by the dotted line in Figure \ref{NSPPTneqFHA}),
Bob can choose to discriminate perfectly between
$\ket{0}_{\ai}$ and $\ket{1}_{\ai}$ or
between $\ket{+}_{\ai}$ and $\ket{-}_{\ai}$ depending on his input.
It must therefore be that
\begin{align*}
    \tr_{\C\R} (\mathcal{G}_{\C\R\<\ai} \ketbra{0}{0}_{\ai})(\mathcal{G}_{\C\R\<\ai}
    \ketbra{1}{1}_{\ai}) =& 0,\\
    \tr_{\C\R} (\mathcal{G}_{\C\R\<\ai} \ketbra{+}{+}_{\ai})(\mathcal{G}_{\C\R\<\ai}
    \ketbra{-}{-}_{\ai}) =& 0.
\end{align*}
By Lemma 1 of Cubitt and Smith \cite{2012-CubittSmith},
this implies that $\mathcal{G}$ is capable of
sending a single qubit perfectly. Since the forward classical
communication over $\mc{C}$ cannot increase the zero-error quantum
capacity of $\mc{F}$, it must be that $\mc{F}$ itself can send a single
qubit perfectly. Clearly, no Horodecki channel can do this.

\section{Conclusion}
We have shown how a number of operationally relevant
classes of quantum code (such as unassisted codes, entanglement-assisted codes,
codes assisted by forward classical communication)
can be regarded as sub-classes
of the \emph{forward-assisted codes}, which
correspond to deterministic supermaps or,
equivalently, to bipartite operations
which are non-signalling from Bob to Alice.
By requiring additionally that
these operations are PPT-preserving, non-signalling
(from Alice to Bob), or both,
we obtain non-trivial bounds on the performance
of the operationally defined classes of codes,
in the form of simple semidefinite programs.

The SDP for non-signalling codes gives an upper
bound on entanglement-assisted codes which is at least
as tight as the one given in \cite{1210.4722},
and we use this fact to show that the capacity
of entanglement-assisted and non-signalling
codes is the same for memoryless channels. 
It would be interesting to find out 
if the SDP for non-signalling codes 
is strictly better than the bound in \cite{1210.4722}.  

In the case of codes which are PPT-preserving,
we described how these are related to the
PPT-preserving entanglement distillation
protocols studied by Rains. 
This gave us a general lower bound on
the PPT-preserving code performance
and an equality between code performance
and distillation fidelity of the Choi state
for some special channels. 
This equality let us use Rains' results 
to obtain the PPT-preserving code capacities
(even the zero-error capacities) for
the $d$-dimensional Werner-Holevo channels.
In regarding the conditions for equality,
we would be interested to know a complete
characterisation of when a channel can be
implemented exactly using a single copy
of its Choi state and forward classical
communication.

By imposing both non-signalling and PPT-preserving constraints
we obtain bounds on the fidelity of unassisted quantum codes.
Again using the example of Werner-Holevo channels, we show that
this provides a strictly stronger bound in the finite block length regime.
Numerics suggest that it can may even be stronger asymptotically.  
It would be interesting to find out whether this is indeed the case; 
for example, to show a separation of the capacities, it suffices to 
find feasible solutions for the dual programs for an infinite
sequence of block lengths, which yield an upper bound
for which an asymptotic separation can be proven.

Even with both constraints on the code we find that
zero-error communication of a qubit is possible
given two uses of the three-dimensional Werner-Holevo channel.
It is not clear to us whether assistance by
forward communication over Horodecki channels would allow the
same phenomenon via ``superactivation'' in the sense
of Smith and Yard. We have given an example showing
that not \emph{all} non-signalling,
PPT-preserving bipartite operations can by implemented
by forward communication over Horodecki channels,
but this does not settle the question.
It would be of interest to do so.

One potential application of our SDP concerns bounds on regular
quantum error correcting codes.  Consider a family of channels
parameterized by some error strength $\epsilon$.  For each block
length $n$ and codespace dimension $K$, our SDP can be used to
evaluate the code fidelity as a function of $\epsilon$.  The existence
of an unassisted code that corrects for $t$ errors will imply an 
assisted code fidelity that is at least $1-O(\epsilon^{t{+}1})$.  
Thus, a numerically obtained fidelity worse than 
$1-O(\epsilon^{t{+}1})$ can be viewed as evidence for the 
non-existence of such unassisted codes.

Finally (as mentioned above)
we have determined that $Q^{\NS}$ is equal to
$Q^{\EA}$ for which there is a single-letter formula
due to \cite{BSST}, but do any of the capacities
$Q^{\FHA}$, $Q^{\NS \cap \PPTp}$ or $Q^{\PPTp}$
have a single-letter formula?

\section*{Acknowledgements}

We thank Nilanjana Datta, Runyao Duan,
Michael Wolf and Andreas Winter for insightful discussions.

\appendix
\section{Linear program for generalised Werner-Holevo channels}\label{LPs}
We consider $n$ uses of the generalised Werner-Holevo channel
(\ref{GWH}). The input system is $\mathbf{A}' = \ao_1 \cdots \ao_n$
and the output system is $\mathbf{B} = \bi_1 \cdots \bi_n$,
where $\dim(\ao_i) = \dim(\bi_i) = d$.
The Choi matrix of the operation is $d^n w(d,\alpha)^{\ox n}$ where
$w(d,\alpha)_{\ao\bi}$ is the Werner state
$w(d,\alpha)_{\ao\bi} =
(1-\alpha) \sym_{\ao\bi}/(\tr \sym) + \alpha \asym_{\ao\bi}/(\tr \asym)$.

As such, the Choi matrix is invariant under conjugation by
$U_{\bs{A}_{j}} U_{\bs{B}_{j}}$, for all unitaries $U$
and $j \in \{1,\ldots,n\}$, and invariant under permutations.
Therefore, in the semidefinite program, there is no loss of generality
in assuming that the operator $\sdpop_{\mathbf{A}'\mathbf{B}}$
possesses the same invariance,
and that $\rho_{\mathbf{A}'}$
is invariant under the restriction of these actions
to the input subsystems. Since this 
means that $\rho_{\mathbf{A}'}$ is invariant
under an arbitrary unitary transformation of any one of the $n$
input subsystems, $\rho_{\mathbf{A}'}$
can only be the maximally mixed state
$\rho_{\mathbf{A}'} = \1_{\mathbf{A}'}/d^n$.
As for $\sdpop_{\mathbf{A}'\mathbf{B}}$,
it must be a linear combination of $n+1$ orthogonal projectors
\begin{equation}\label{n-Werner-R}
    \sdpop_{\mathbf{A}'\mathbf{B}} = \sum_{k=0}^n x_{k} E^{n}_{k}
\end{equation}
where $E^{n}_{k}$ is the sum of all $n$-fold tensor products
of the operators $\sym$ and $\asym$ which contain exactly $k$
copies of $\asym$ (see Example $\ref{example}$ for an example
of an $\sdpop_{\mathbf{A}'\mathbf{B}}$ of this form for $n=2$).
The partial transpose of such an
operator is itself given by a sum of orthogonal projectors.
Let $\Upsilon^{n}_{i}$ denote the sum of all $n$-fold tensor products
of the projectors $\1-\phi$ and $\phi$ which contain exactly $k$
copies of $\phi$ e.g.
$\Upsilon^{2}_{1} =
(\1-\phi)_{\bs{A}'_{1}\bs{B}_{1}}\phi_{\bs{A}'_{2}\bs{B}_{2}} +
\phi_{\bs{A}'_1\bs{B}_1}(\1-\phi)_{\bs{A}'_2\bs{B}_2}$.
Then
\begin{equation}
    \t_{\mathbf{B}\<\mathbf{B}}
    \sdpop_{\mathbf{A}'\mathbf{B}}
    = \sum_{i,j=0}^n \Upsilon^{n}_{i} M^{(n)}_{ij} x_j,
\end{equation}
where
\begin{equation}
    M^{(n)}_{ij}
    := 2^{-n} \sum_{k=0}^{\min\{i,j\}} \binom{n-i}{j-k} \binom{i}{k}(1+d)^{i-k} (1-d)^k.
\end{equation}
See \cite{2009-MatthewsWinter} for the
derivation of this formula for $M^{(n)}_{ij}$.
For the non-signalling constraint, the fact that
$\tr_{\bi} \sym_{\ao\bi} = \frac{d+1}{2} \1_{\ao}$ and
$\tr_{\bi} \asym_{\ao\bi} = \frac{d-1}{2} \1_{\ao}$
and a little counting show that
$\tr_{\mathbf{B}} E^{n}_{j} = g^{(n)}_j \1_{\bi}$ where
\[
    g^{(n)}_j := 2^{-n} \binom{n}{j} (d+1)^{n-j} (d-1)^j.
\]
Substituting (\ref{n-Werner-R}) and $\rho_{\ao} = d^{-n}\1_{\mathbf{A}'}$
into the SDP described in Theorem \ref{main-T} and
Corollary \ref{primal-SDP} and using the facts just established,
we obtain
\begin{proposition}
The optimal channel fidelity
of a forward-assisted-code of size $K$
for $n$ uses of the $d$-dimensional
generalised Werner-Holevo channel $\mc{W}(d,\alpha)$ is
given by the linear program
\begin{align}
    \max d^n &\sum_{j = 0}\binom{n}{j} (1-\alpha)^{n-j}\alpha^j x_j\\
    &\text{subject to }\notag\\
    &\text{ for all } i = 0,\ldots,n\\
    &0 \leq x_i \leq d^{-n}
\end{align}
with the additional constraint 
\begin{equation}
    \NS:
    \sum_{j=0}^n g^{(n)}_j x_j = 1/K^2
\end{equation}
if the code is non-signalling, and the constraint
\begin{equation}
    \PPTp:
    \begin{cases}
        \sum_{j=0}^n M^{(n)}_{ij} x_j \geq -d^{-n}/K,\\
        \sum_{j=0}^n M^{(n)}_{ij} x_j \leq d^{-n}/K,
    \end{cases}
\end{equation}
if the code is PPT-preserving.
\end{proposition}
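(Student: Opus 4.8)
The plan is to specialise the semidefinite program of Theorem~\ref{main-T} and Corollary~\ref{primal-SDP} to $\mc{N}=\mc{W}(d,\alpha)^{\ox n}$, whose Choi matrix is $d^{n}w(d,\alpha)_{\ao\bi}^{\ox n}$, and to use the symmetry of this Choi matrix to collapse the matrix variables $\sdpop_{\mathbf{A}'\mathbf{B}}$ and $\rho_{\mathbf{A}'}$ to the $n+1$ scalars $x_{0},\dots,x_{n}$ of (\ref{n-Werner-R}), after which each ingredient of the SDP becomes an affine function of the $x_{k}$.

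First I would introduce the group $G$ generated by conjugation by $U_{\ao_{j}}U_{\bi_{j}}$ for all $U\in U(d)$ and all $j$, together with the simultaneous permutations of the $n$ pairs $(\ao_{j},\bi_{j})$. The channel Choi matrix $d^{n}w^{\ox n}$ is manifestly $G$-invariant, and so is its transpose $N^{\T}_{\mathbf{A}'\mathbf{B}}$ (since $G$ is closed under complex conjugation); hence $\tr N^{\T}\sdpop$ is constant on the $G$-orbit of $\sdpop$. I would then check that the feasible set of the SDP is convex and $G$-invariant. The only point needing care is that a generator $U_{\ao_{j}}U_{\bi_{j}}$ acts on $\t_{\bi\<\bi}\sdpop$ by conjugation with $\bar U_{\bi_{j}}$ rather than $U_{\bi_{j}}$, but the right-hand sides $\1_{\bi}$, $\rho_{\ao}\1_{\bi}$ and $\rho_{\ao}\1_{\bi}/K$ are invariant under such conjugations, and the constraint $\sdpop_{\ao\bi}\le\rho_{\ao}\1_{\bi}$ is compatible with simultaneously averaging $\rho$; so every constraint of Theorem~\ref{main-T} survives the $G$-twirl. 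Averaging a feasible $(\sdpop,\rho)$ over the Haar measure of $G$ then yields a feasible point with the same objective, so we may take $(\sdpop,\rho)$ to be $G$-invariant. Invariance of $\rho_{\mathbf{A}'}$ under an arbitrary unitary on any one input block forces $\rho_{\mathbf{A}'}=\1/d^{n}$; invariance of $\sdpop_{\mathbf{A}'\mathbf{B}}$ places it in the commutant of $G$, which by a standard commutant computation (the commutant of $\{U\otimes U\}$ on a single pair $\ao_{j}\bi_{j}$ is spanned by $\sym$ and $\asym$, and the permutations then symmetrise the $n$-fold tensor products) is exactly the $(n{+}1)$-dimensional span of the projectors $E^{n}_{k}$, giving $\sdpop_{\mathbf{A}'\mathbf{B}}=\sum_{k}x_{k}E^{n}_{k}$.

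The remainder is bookkeeping. Since the $E^{n}_{k}$ are mutually orthogonal projectors summing to $\1$, the constraints (\ref{Rpos}), (\ref{Qpos}) with $\rho_{\ao}=\1/d^{n}$ become $0\le x_{k}\le d^{-n}$. For the objective I would expand $w(d,\alpha)^{\ox n}=\sum_{j}\alpha^{j}(1-\alpha)^{n-j}(\tr\sym)^{-(n-j)}(\tr\asym)^{-j}E^{n}_{j}$ and use $\tr E^{n}_{j}=\binom{n}{j}(\tr\sym)^{n-j}(\tr\asym)^{j}$; the trace factors cancel and, since $\sym$ and $\asym$ are fixed by the full transpose, $\tr N^{\T}\sdpop=d^{n}\sum_{j}\binom{n}{j}(1-\alpha)^{n-j}\alpha^{j}x_{j}$. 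For the non-signalling constraint (\ref{NSABchoi}) I would use $\tr_{\bi}\sym_{\ao\bi}=\tfrac{d+1}{2}\1_{\ao}$ and $\tr_{\bi}\asym_{\ao\bi}=\tfrac{d-1}{2}\1_{\ao}$ to obtain $\tr_{\mathbf{B}}E^{n}_{j}=g^{(n)}_{j}\1_{\bi}$ with $g^{(n)}_{j}=2^{-n}\binom{n}{j}(d+1)^{n-j}(d-1)^{j}$, so the constraint reads $\sum_{j}g^{(n)}_{j}x_{j}=1/K^{2}$. For the PPT constraints (\ref{PPT-constraints}) I would use $\t_{\bi\<\bi}\sym_{\ao\bi}=\tfrac12\big((\1-\phi)+(d+1)\phi\big)_{\ao\bi}$ and $\t_{\bi\<\bi}\asym_{\ao\bi}=\tfrac12\big((\1-\phi)-(d-1)\phi\big)_{\ao\bi}$, multiply out $\t_{\mathbf{B}\<\mathbf{B}}E^{n}_{j}$, and collect terms according to the number $i$ of tensor factors equal to $\phi$, matching against $\sum_{i,j}\Upsilon^{n}_{i}M^{(n)}_{ij}x_{j}$; counting which of the $i$ $\phi$-slots arise from $\asym$-positions versus $\sym$-positions gives exactly the binomial sum defining $M^{(n)}_{ij}$, for which I would cite \cite{2009-MatthewsWinter}. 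Because $\rho_{\ao}\1_{\bi}/K=\sum_{i}\Upsilon^{n}_{i}/(Kd^{n})$ with the $\Upsilon^{n}_{i}$ orthogonal projectors, the two operator inequalities become $-d^{-n}/K\le\sum_{j}M^{(n)}_{ij}x_{j}\le d^{-n}/K$ for each $i$. Assembling these pieces gives the stated linear program.

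The step I expect to be the main obstacle is the symmetry reduction: one must verify that the $G$-twirl preserves feasibility of \emph{all} the constraints of Theorem~\ref{main-T} (in particular the interaction of the group action with the partial transpose and with the simultaneous averaging of $\rho_{\ao}$), and one must correctly identify the commutant of $G$ as the algebra spanned by the $E^{n}_{k}$ rather than something larger. Once that is in place, the rest is elementary manipulation of binomial coefficients together with the partial-trace and partial-transpose identities for $\sym$ and $\asym$ and the cited closed form for $M^{(n)}_{ij}$.
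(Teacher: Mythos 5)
Your proof follows essentially the same approach as the paper's Appendix~\ref{LPs}: exploit the per-pair $U\otimes U$ and permutation symmetry of $d^{n}w(d,\alpha)^{\ox n}$ to reduce $\sdpop_{\mathbf{A}'\mathbf{B}}$ to the $(n{+}1)$-dimensional span of the $E^{n}_{k}$ and force $\rho_{\mathbf{A}'}=\1/d^{n}$, then substitute into the SDP of Theorem~\ref{main-T} and Corollary~\ref{primal-SDP} and evaluate the objective, the partial trace ($g^{(n)}_{j}$) and the partial transpose ($M^{(n)}_{ij}$, via \cite{2009-MatthewsWinter}). You spell out the group-twirl justification for the symmetry reduction (in particular the compatibility with the partial transpose and the simultaneous averaging of $\rho$) that the paper states as a ``no loss of generality'' without detail, but the underlying argument and all subsequent computations coincide with the paper's.
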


\section{Teleportation and dense coding}\label{apdx}

\def\depol{\mc{X}}
It will be useful to define a \emph{non-signalling channel}
to be one of the form $\mc{R}_{\bi\<\ao} = \sigma_{\bi} \tr_{\ao}$.
If we apply a non-signalling code (or, more generally, a
non-signalling deterministic supermap) to a non-signalling
channel, then it is not hard to see that the result is
also a non-signalling channel.

The $d$-\emph{ary symmetric classical channels}
$\mc{C}^{(p)}_{\Q'\<\Q}$
in $\ops(\Q \to \Q')$ with $\dim \Q = \dim \Q' = d$
can be parameterised by their success probability
$p$ such that: $\mc{C}^{(1)}_{\Q'\<\Q}$
is the classical identity channel $\mc{C}_{\Q'\<\Q}$
;
$\mc{C}^{(1/d)}_{\Q'\<\Q} := d^{-1} \1_{\Q'} \tr_{\Q}$,
(which is a non-signalling channel);
and $\mc{C}^{(p)}$ defined so that $p \mapsto \mc{C}^{(p)}$
is linear in $p$. This results in $\mc{C}^{(p)}$ being
a valid operation for the range $p \in [0,1]$.
The `symmetry' in their name refers to the fact that
the channels commute with an permutation of the
computational basis elements.
The deterministic supermap $\mathfrak{P}$ defined by
\begin{equation}
    \mathfrak{P}: \mc{M}_{\Q'\<\Q}
    \mapsto \frac{1}{d!} \sum_{\pi}
    \mc{U}^{\pi^{-1}}_{\Q'\<\Q'} \mc{M}_{\Q'\<\Q} \mc{U}^{\pi}_{\Q\<\Q}
\end{equation} turns any channel
$\mc{M}_{\Q'\<\Q}$ in $\ops(\Q \to \Q')$
with $\Ps(\mc{M}_{\Q'\<\Q}) = p$
into $\mc{C}^{(p)}_{\Q'\<\Q}$.
Here, $\pi$ ranges over all permutations of the numbers $\{ 0, \ldots, d-1 \}$
and $\mc{U}^{\pi}$ is the unitary operation which permutes the computational
basis vectors according to $\pi$.

Likewise, the $d$-dimensional \emph{depolarising channels}
$\depol^{(f)}$ can be parameterised by their channel
fidelity $f$, with $\depol^{(1)}_{\Q'\<\Q} = \id_{\Q'\<\Q}$,
$\depol^{(1/d^2)}_{\Q'\<\Q} = d^{-1} \1_{\Q'} \tr_{\Q}$
and the rest so that $f \mapsto \depol^{(f)}$ is linear.
Again, this means that $\depol^{(f)}$
is a valid operation for all $f \in [0,1]$.
Given any channel $\mc{M}_{\Q'\<\Q}$ in $\ops(\Q \to \Q')$ with
$\F(\mc{M}_{\Q'\<\Q}) = f$, applying the `twirling' deterministic
supermap
\begin{equation}
    \mathfrak{U}: \mc{M}_{\Q'\<\Q}
    \mapsto \int d\mu(U)
    \mc{U}_{\Q'\<\Q'} \mc{M}_{\Q'\<\Q} \mc{U}_{\Q\<\Q}
\end{equation}
where $\mu$ is the Haar probability measure on $\mathrm{U}(d)$
and $\mc{U}$ the unitary operation which conjugates by $\mc{U}$,
will turn it into $\depol^{(f)}_{\Q'\<\Q}$.

A teleportation protocol is an
entanglement-assisted code in $\EA$ taking
$\ops(\ao \to \bi)$ to $\ops(\bi \to \ao)$
where $\dim \ao = \dim \bi = d^2$
and $\dim \ai = \dim \bo = d$.
We call the deterministic supermap $\mathfrak{T}$.
It maps the $d^2$-dimensional classical identity
channel to the $d$-dimensional quantum identity
channel.
\begin{equation}\label{teleport-id}
    \mathfrak{T}[ \C^{(1)}_{\bi\<\ao} ]
    = \id_{\bo\<\ai} = \depol^{(1)}_{\bo\<\ai}.
\end{equation}
By twirling, we can assume that the channel produced
by the teleportation protocol is a depolarising channel.

The only non-signalling
$d$-dimensional depolarising channel is $\depol^{1/d^2}$ so
\begin{equation}\label{teleport-NS}
    \mathfrak{U}\circ\mathfrak{T}[ \C^{(1/d^2)}_{\bi\<\ao} ] = \depol^{(1/d^2)}_{\bo\<\ai}.
\end{equation}
By eqn.~(\ref{teleport-id}), eqn.~(\ref{teleport-NS})
and linearity we have
\begin{equation}
    \mathfrak{U}\circ\mathfrak{T}[ \C^{(\lambda)}_{\bi\<\ao} ]
    = \depol^{(\lambda)}_{\bo\<\ai}.
\end{equation}
Therefore, given any operation $\mc{M}_{\bi\<\ao}$ with
success probability $\Ps(\mc{M}_{\bi\<\ao}) = \lambda$
we can apply the entanglement-assisted deterministic supermap
$\mathfrak{U}\circ\mathfrak{T}\circ\mathfrak{P} \in \EA$
to obtain
a depolarising channel with channel fidelity $\lambda$:
\begin{equation}
    \mathfrak{U}\circ\mathfrak{T}\circ\mathfrak{P}[\mc{M}_{\bi\<\ao}]
    = \depol^{(\lambda)}_{\bo\<\ai}.
\end{equation}

A \emph{dense-coding protocol} is an
entanglement-assisted code
$\mathfrak{D}: \ops(\ao \to \bi) \to \ops(\bi \to \ao)$
where $\dim \ao = \dim \bi = d$
and $\dim \ai = \dim \bo = d^2$,
such that $\mathfrak{D}[\depol^{(1)}_{\bi\<\ao}] = \C^{(1)}_{\bo\<\ai}$.
Using a similar argument to the above find that
\begin{equation}
    \mathfrak{P}\circ\mathfrak{D}[\depol^{(\lambda)}_{\bi\<\ao}] = \C^{(\lambda)}_{\bo\<\ai}
\end{equation}
and that, from any operation $\mc{N}_{\bi\<\ao}$ with channel fidelity
$F(\mc{N}_{\bi\<\ao}) = \lambda$ we can obtain a $d^2$-ary symmetric classical
channel with success probability $\lambda$:
\begin{equation}
    \mathfrak{P}\circ\mathfrak{D}\circ\mathfrak{U}[\mc{N}_{\bi\<\ao}]
    = \mathcal{C}^{(\lambda)}_{\bo\<\ai}.
\end{equation}
\newpage

\bibliography{bigbibcopy}

\end{document}